\newtheorem{theorem}{Theorem}[section]
\newtheorem{corollary}[theorem]{Corollary}
\newtheorem{definition}{Definition}[section]
\newtheorem{proposition}[theorem]{Proposition}
\newtheorem{remark}{Remark}
\newtheorem*{problem statement}{Problem Statement}
\newtheorem{assumption}{Assumption}
\newcommand{\real}{\mathbb{R}}
\newcommand{\rank}{\mathrm{rank}}
\newcommand{\Cc}{{\mathcal{C}}}
\newcommand{\Nc}{{\mathcal{N}}}
\newcommand{\Xc}{{\mathcal{X}}}
\newcommand{\Ic}{{\mathcal{I}}}
\newcommand{\Pc}{{\mathcal{P}}}
\newcommand{\Jc}{{\mathcal{J}}}
\newcommand{\Rc}{{\mathcal{R}}}
\newcommand{\Kc}{{\mathcal{K}}}
\newcommand{\Sc}{{\mathcal{S}}}
\newcommand{\bK}{{\mathbf{K}}}
\newcommand{\bx}{{\mathbf{x}}}
\newcommand{\by}{{\mathbf{y}}}
\newcommand{\ba}{{\mathbf{a}}}
\newcommand{\be}{{\mathbf{e}}}
\newcommand{\bg}{{\mathbf{g}}}
\newcommand{\bu}{{\mathbf{u}}}
\newcommand{\bv}{{\mathbf{v}}}
\newcommand{\bk}{{\mathbf{k}}}
\newcommand{\bb}{{\mathbf{b}}}
\newcommand{\bff}{{\mathbf{f}}}
\newcommand{\bA}{{\mathbf{A}}}
\newcommand{\bB}{{\mathbf{B}}}
\newcommand{\bG}{{\mathbf{G}}}
\newcommand{\bH}{{\mathbf{H}}}
\newcommand{\bD}{{\mathbf{D}}}
\newcommand{\bC}{{\mathbf{C}}}
\newcommand{\bR}{{\mathbf{R}}}
\newcommand{\bQ}{{\mathbf{Q}}}
\newcommand{\mb}[1]{\mathbf{ #1 }}
\newcommand{\blambda}{{\boldsymbol{\lambda}}}
\newcommand{\bbgamma}{{\boldsymbol{\gamma}}}
\newcommand{\bbkappa}{{\boldsymbol{\kappa}}}
\newcommand\xqed[1]{%
  \leavevmode\unskip\penalty9999 \hbox{}\nobreak\hfill
  \quad\hbox{#1}}
\newcommand\demo{\xqed{$\bullet$}}
\DeclareMathOperator*{\argmin}{arg\,min}
\newcommand{\map}[3]{#1:#2 \rightarrow #3}
\newcommand{\longthmtitle}[1]{\mbox{}\emph{(#1):}}
\newcommand{\setdef}[2]{\{#1 : #2\}}
\newcommand{\norm}[1]{\left\lVert#1\right\rVert}
\begin{document}

\makeatletter
\renewcommand{\thealgorithm}{\arabic{algorithm}}
\setcounter{algorithm}{0} 
\makeatother

\begin{frontmatter}

\title{Explicit Control Barrier Function-based Safety Filters and their Resource-Aware Computation} 

\thanks[footnoteinfo]{This research was supported in part by the Boeing Strategic University Initiative and the Technology Innovation Institute (TII).
$^\dagger$ Equal contribution.}

\author[First]{Pol Mestres$^{\dagger}$}
\author[First]{Shima Sadat Mousavi$^{\dagger}$}
\author[First]{Pio Ong}
\author[First]{Lizhi Yang}
\author[First]{Ersin Da\c{s}}
\author[First]{Joel W. Burdick}
\author[First]{Aaron D. Ames}

\address[First]{California Institute of Technology, 
   Pasadena, CA 91125 USA (e-mail: \{mestres, smousavi, pioong, lzyang, ersindas, jburdick, ames\}@caltech.edu).}

\begin{abstract}
This paper studies the efficient implementation of \textit{safety filters} that are designed using control barrier functions (CBFs), which minimally modify a nominal controller to render it safe with respect to a prescribed set of states.
Although CBF-based safety filters are often implemented by solving a quadratic program (QP) in real time, the use of off-the-shelf solvers for such optimization problems poses a challenge in applications where control actions need to be computed efficiently at very high frequencies.
In this paper, we introduce a closed-form expression for controllers obtained through CBF-based safety filters.
This expression is obtained by partitioning the state-space into different regions, with a different closed-form solution in each region.
We leverage this formula to introduce a resource-aware implementation of CBF-based safety filters that detects changes in the partition region and uses the closed-form expression between changes.
We showcase the applicability of our approach in examples ranging from aerospace control to safe reinforcement learning.
\end{abstract}

\begin{keyword}
Control barrier functions, optimization-based control, safety-critical and resilient systems, event-based control, reinforcement learning.
\end{keyword}

\end{frontmatter}



\section{Introduction}


Most engineering systems—from humanoid robots to aerospace vehicles—must satisfy strict safety-critical constraints during operation. In control-theoretic analysis and design, these constraints are typically enforced by ensuring that the system state remains within a prescribed safe region of the state space. Several methods exist to achieve this, including control barrier functions (CBFs)~\citep{ADA-XX-JWG-PT:17}, model predictive control~\citep{JBR-DQM-MMD:17}, Hamilton–Jacobi reachability~\citep{SB-MC-SH-CJT:17}, and reference governors~\citep{EG-SDC-IK:17}.

A popular approach to designing safe controllers is through a mechanism known as \textit{safety filters}~\citep{ADA-SC-ME-GN-KS-PT:19}, whereby a nominal (potentially unsafe) controller is minimally modified in order to be safe.
A common way to implement safety filters is through CBFs.
In this case, the resulting controller is obtained at every point in the state space as the solution of an optimization problem, which is a quadratic program (QP) if the dynamics are control-affine~\citep{ADA-SC-ME-GN-KS-PT:19}.

CBF-based controllers obtained by solving QPs (often referred to as CBF-QPs) in real time have been successfully implemented in a wide range of applications, such as adaptive cruise control~\citep{ADA-JWG-PT:14}, bipedal robotic walking~\citep{SCH-XX-ADA:15},  quadrotors maneuvering~\citep{LW-ADA-ME:17}, among others, 
in part thanks to the availability of very efficient algorithms for solving QPs, such as operator splitting QP (OSQP)~\citep{BS-GB-PG-AB-SB:20}.

However, there are application domains where computing the controller by solving an optimization problem in real time is undesirable  or even  unacceptable.
For example, in aerospace control applications, 
the introduction of a QP solver in the flight control loop requires its certification in all possible flight scenarios, which can be a challenging process.
In some instances, the solver may fail to produce a solution within the required update period, or may be prone to numerical errors, which can compromise the safety and reliability of the overall system.

Another application domain where using QP solvers to implement CBF-based controllers is undesirable is reinforcement learning (RL).
In order to guarantee safety of an RL agent during the training process, it is common to add a safety filter to the RL-based policy~\citep{FPB-LB-APS:25,MA-RB-RE-BK-SN-UT:18,RC-GO-RMM-JWB:19,LY-BW-MdS-ADA:25}.
However, because RL training runs in parallel across many environments and relies on large numbers of trajectories, solving a QP at every action step becomes computationally prohibitive, especially when using GPUs.

In domains where fast computation is critical, it is desirable to implement the safety filter without relying on an online optimization solver, ideally through an explicit expression. Prior work has shown that this is possible in several important structured settings. Closed-form solutions are available for safety filters with one or two CBF constraints~\citep{XX-PT-JWG-ADA:15,XT-DVD:24,PM-YC-EDA-JC:25-jnls}. Explicit formulas are also available for some output-constraint formulations of known relative degree when the cost function is chosen appropriately~\citep{MHC-EL-ADA:25}; more recently, this line of work has been extended to broader classes of linear systems with affine safety constraints and accompanying explicit feasibility characterizations~\citep{mousavi2026structure}. Related set-wise compatibility certificates over convex regions are developed in~\citep{mousavi2025vertices}, and closed-form controllers for certain barrier-certified systems with vector-valued safety constraints are derived in~\citep{marvi2024robust}. Together, these results illustrate both the promise and the current scope of explicit safety-filter implementations: they can substantially reduce online computation, but the available formulas generally depend on specific structural assumptions. Moreover, although multiple CBFs can be merged into one~\citep{LW-ADA-ME:16,LL-DVD:19,TGM-ADA:23}, the resulting constraint may conservatively shrink the safe set and does not in general guarantee feasibility of the associated QP.

We are motivated by the desire to avoid solving the CBF-QP at every time step. Our approach is closely aligned with the broader theme of resource-aware control. In particular, we take inspiration from the event-triggered control (ETC) literature~\citep{WPMHH-KHJ-PT:12}. Since the seminal work~\citep{PT:07}, ETC has focused primarily on reducing communication~\citep{XW-ML:11} or actuation updates by transmitting control signals only when certain triggering conditions are met. 
To the best of our knowledge, there has been no ETC framework aimed at reducing computational burden. Periodic ETC (PETC)~\citep{WPMHH-MCFD-ART:08} is the closest in spirit, as it acknowledges that triggering conditions must be checked at discrete time instants due to limits in computation. However, PETC is not designed to alleviate the computational load of implementing controllers. In this paper, we introduce a PETC-inspired mechanism that alleviates the need to solve a QP at every state and utilizes computational resources only when strictly necessary, while still preserving correctness of the implemented controller.

The contributions of this paper are as follows.
First, we present a closed-form expression for controllers which are computed as the solution of QPs with an arbitrary number of CBF constraints. This formula shows that any domain where the QP is feasible can be partitioned into different regions, defined by the set of constraint indices that are active at the optimizer, and that within each region the optimal solution can be expressed algebraically.
Under mild regularity assumptions, we further show that the CBF-QP is smooth within each region and locally Lipschitz on the entire domain.
For the special case of linear systems with affine CBF constraints, we prove that the controller is affine on each region and provide a global Lipschitz constant.
Next, we observe that evaluating the closed-form CBF-QP at a given state requires identifying the region that contains that state. To address this point, we derive from the KKT conditions a set of simple tests for verifying region membership. These tests allow us to detect when the state trajectory leaves a region and thus determine when it is necessary to recompute the active set, i.e., calling on  an optimization solver. Building on this insight, we 
propose a practical, resource-aware implementation of CBF-QPs. Finally, we formalize the correctness of this resource-aware implementation and validate it extensively in a variety of case studies, ranging from aerospace control to safe reinforcement learning.

\section{Background}

In this section, we revisit CBFs and introduce the main problem we seek to solve in this paper.

Consider a control-affine system\footnote{Notation: throughout the paper, for any integer $p\in\mathbb{N}$, $[p] = \{ 1,\hdots, p \}$, 
$\alpha:(-b, a)\to\real$, with $a, b > 0$ is of extended class $\Kc$ if it is continuous, strictly increasing, and $\alpha(0) = 0$.
For a continuously differentiable function $h:\real^n\to\real$ and a vector field $\bff:\real^n\to\real^n$, we define the Lie derivative $L_{\bff}h(\bx) = \nabla h(\bx)^\top \bff(\bx)$. Given a positive definite matrix $\bR \succ 0$ and a vector $\bx$, $\norm{\bx}_{\bR} = \sqrt{\bx^\top \bR \bx}$.
For a set $\Sc$, $\text{Int}(\Sc)$ denotes its interior, and $\Pc(\Sc)$ its power set.
Given $a_1, \hdots, a_n\in\real$, $\text{diag}(a_1, \hdots, a_n)$ denotes the diagonal matrix with diagonal elements equal to $a_1, \hdots, a_n$.
}: 
\smallskip
\begin{align}\label{eq:control-affine-system}
    \dot{\bx} = \bff(\bx) + \bg(\bx) \bu,
\end{align}
\smallskip
where $\bx\in\real^n$ is the state, $\bu\in\real^m$ is the control input. Here, the drift dynamics $\map{\bff}{\real^n}{\real^n}$ and the input matrix $\map{\bg}{\real^n}{\real^{n\times m}}$ are assumed locally Lipschitz.
Given a closed set $\Cc$ describing the safe states for system~\eqref{eq:control-affine-system}, our goal is to render it forward invariant, ensuring that the state trajectory $t\mapsto\bx(t)$ remains inside the set. To this end, we recall the notion of control barrier function (CBF)~\citep{ADA-SC-ME-GN-KS-PT:19}.
\smallskip

\begin{definition}\longthmtitle{Control Barrier Function}\label{def:cbf}
    Let $h:\real^n\to\real$ be a continuously differentiable function defining a set $\Cc = \setdef{\bx\in\real^n}{h(\bx) \geq 0}$.
    The function $h$ is a CBF for~\eqref{eq:control-affine-system} if there exists an extended class $\Kc$ function $\alpha$ such that, for each $\bx\in\Cc$, there exists $\bu\in\real^m$ satisfying:
    \smallskip
    \begin{align}\label{eq:cbf-ineq}
        L_{\bff} h(\bx) + L_{\bg} h(\bx)\bu + \alpha(h(\bx)) \geq 0.
    \end{align}
\end{definition}

CBFs provide a simple framework for designing  controllers that ensure safety, i.e., forward invariance of $\Cc$.  In particular, any locally Lipschitz controller satisfying~\eqref{eq:cbf-ineq} renders $\Cc$ forward invariant~\cite[Theorem 2]{ADA-SC-ME-GN-KS-PT:19}. 

One advantage of CBFs is their flexibility to account for multiple safety constraints, each given by a different continuously differentiable function $h_i:\real^n\to\real$, for $i\in[p]$ and a corresponding zero-superlevel set $\Cc_i$.
In this case, we seek to render the set $\Cc=\bigcap_{i=1}^p \Cc_i$ safe.
Indeed, if we can find a locally Lipschitz controller that satisfies the constraints $L_{\bff} h_i(\bx) + L_{\bg} h_i(\bx)\bu + \alpha(h_i(\bx)) \geq 0$ for all $i\in[p]$, then such controller renders the set $\bigcap_{i=1}^p \Cc_i$ forward invariant.
In particular, given a desired nominal controller $\bk:\real^n\to\real^m$,
and a positive definite matrix $\bR \succ 0$,
the following controller defined as the solution of an optimization problem:
\begin{align}\label{eq:multiple-cbf-based-controller}
    \notag
    \bu^*(\bx) &= \argmin\limits_{\bu\in\real^m} \frac{1}{2}\norm{\bu-\bk(\bx)}_{\bR}^2, \\
    &L_{\bff} h_i(\bx) + L_{\bg} h_i(\bx)\bu \geq -\alpha(h_i(\bx)), \quad i\in[p],
\end{align}
\smallskip
is locally Lipschitz under mild assumptions, cf.~\citep{BJM-MJP-ADA:15,PM-AA-JC:25-ejc}.
Note that~\eqref{eq:multiple-cbf-based-controller} is a QP, often referred to as the CBF-QP or as a \textit{safety filter}.

As noted in the introduction, there are various applications where solving the QP~\eqref{eq:multiple-cbf-based-controller} in real time is undesirable. This paper seeks to develop a resource-aware implementation of the controller~\eqref{eq:multiple-cbf-based-controller} that avoids or reduces the need to execute an optimization algorithm at every time step.




\section{Explicit CBF-Based Quadratic Programs}\label{sec:explicit-cbf-qp}

In this section we provide a closed-form expression for $\bu^*$ in~\eqref{eq:multiple-cbf-based-controller}
and establish  its key properties when the constraints arise from the CBF inequalities~\eqref{eq:cbf-ineq}. While closed-form solutions are known for one or two CBF constraints~\citep{XX-PT-JWG-ADA:15,XT-DVD:24}, we extend these results to an arbitrary number of constraints. This characterization forms the basis for determining when we can avoid running an optimization algorithm to obtain $\bu^*$.

Our extension aligns with classical  Explicit MPC results~\citep{AB-MM-VD-EP:02}, which show that when a parametric QP similar to~\eqref{eq:multiple-cbf-based-controller} is feasible on some given set $\Xc\subset\real^n$, the set $\Xc$ can be partitioned into regions on which the optimizer is expressed algebraically in closed-form. Our results specialize this  to CBF-QPs and provide an explicit representation tailored to barrier-function constraints.

Before presenting our main results, we begin by introducing some notation we use, for compactness of the presentation. We define $\bb_i(\bx) = -L_{\bg}h_i(\bx)^\top$ and $a_i(\bx) = -L_{\bff}h_i(\bx) - \alpha(h_i(\bx))$ so that our CBF inequalities are simply $\bb_i(\bx)^\top\bu+ a_i(\bx) \leq 0$. With this notation, we introduce the concept of active constraints. That is, the index set of active constraints $\map{\Ic^*}{\real^n}{\Pc([p])}$ is defined as:
\begin{align*}
    \Ic^*(\bx) = \setdef{i\in[p]}{ \bb_i(\bx)^\top \bu^*(\bx) + a_i(\bx) = 0}.
\end{align*}
With this notation in place, we now make the following well-posedness assumption for our optimization problem.
\smallskip

\begin{assumption}\label{ass:assum1}
For the parametric optimization problem in~\eqref{eq:multiple-cbf-based-controller},
let $\Xc\subset\real^n$ be the domain of interest and assume:
\begin{enumerate}
    \item[(A1)] (Strict convexity) $\bR\succ0$,
    \item[(A2)] (Regularity) $\bk(\cdot)$, $L_{\bff}h_i(\cdot)$, $L_{\bg}h_i(\cdot)$, and $\alpha(h_i(\cdot))$ are $C^1$ on $\Xc$ for all $i\in[p]$.
    \item[(A3)](Linear Independence Constraint Qualification (LICQ)) For every $\bx\in \Xc$, the QP~\eqref{eq:multiple-cbf-based-controller} has an optimal solution $\bu^*(\bx)$, and the set of active constraints $\{\bb_i(\bx)\}_{i\in\Ic^*(\bx)}$ is linearly independent. 
\end{enumerate}  
\end{assumption}

Note that  (A3) implicitly requires the existence of an optimizer, and hence implies the feasibility of the QP on $\Xc$. We particularly adopt LICQ for this because it is known to ensure local Lipschitzness of the controller $\bu^*$, see \citep{PM-AA-JC:25-ejc}.
Our main result is the characterization of regions on which the constraints corresponding to a given index set $\Ic\subseteq[p]$ are active, namely:
\begin{align*}
    \Rc_{\Ic} = \setdef{\bx\in\Xc}{\Ic^*(\bx) = \Ic}.
\end{align*}
This definition is not directly useful because it requires knowing the active index set $\Ic^*(\bx)$. However, if the active constraints at a point $\bx$ are known, they can be treated as equalities, and the corresponding $\bu^*(\bx)$ is readily obtained. Unlike classical explicit MPC—where constraints are affine in both the decision variable and the parameter—CBF constraints are affine in the control but generally nonlinear in the state. As a result, the regions
$ \Rc_{\Ic}$
 need not be polyhedral in our general setup.

To formalize this, for any index set $\Ic = \{ i_1 \hdots i_k \} \subset [p]$, we define a useful shorthand notation:
\begin{align*}
    \bB_{\Ic}(\bx) \! := \! [\bb_{i_1}(\bx), \hdots, \bb_{i_k}(\bx)], \ \ba_{\Ic}(\bx) \! := \! [a_{i_1}(\bx), \hdots, a_{i_k}(\bx)]^\top.
\end{align*}
We note that whenever $\bB_{\Ic}(\bx)$ has full column rank, the matrix $\bB_{\Ic}(\bx)^\top \bR^{-1} \bB_{\Ic}(\bx)$ is invertible. In this case, the candidate optimal Lagrange multipliers and control input  associated with the index set $\Ic$ are given by:
\begin{equation}
\begin{aligned}\label{eq:lambda}
\blambda_{\Ic}(\bx)&:=\big(\bB_{\Ic}(\bx)^\top \,\bR^{-1}\bB_{\Ic}(\bx) \big)^{-1}
\big(\bB_{\Ic}(\bx)^\top \,\bk(\bx)+\ba_{\Ic}(\bx)\big),\\
\bu_{\Ic}(\bx)&:=\bk(\bx)-\bR^{-1}\bB_{\Ic}(\bx) \,\blambda_{\Ic}(\bx).
\end{aligned}
\end{equation}
The control $\bu_{\Ic}(\bx)$ represents the
optimal control input when the index set $\Ic$ is indeed $\Ic^*(\bx)$.

The following result provides an equivalent mathematical characterization of $\Rc_{\Ic}$ that does not require explicitly determining the active index set and establishes precisely when the candidate $\bu_\Ic(\bx)$ is the true optimizer $\bu^*(\bx)$.

\begin{theorem}\longthmtitle{Piecewise solution of parametric QP}
\label{thm:piecewise-solution-parametric-qp}
    Consider the parametric QP~\eqref{eq:multiple-cbf-based-controller} under Assumption~\ref{ass:assum1}.
    Then:
    \begin{enumerate}[(i)]
        \item\label{it:active-set-region} the active–set region $\Rc_{\Ic}$ is equivalently given by:
        \begin{align}\label{eq:R-Ic-expression}
        \notag
        \mathcal R_{\Ic}
        &:=\Big\{\bx\in\Xc:\ \rank(\bB_{\Ic}(\bx))=|\Ic|, 
        \blambda_{\Ic}(\bx)\ge 0, \\
        &\qquad \quad \bb_j(\bx)^\top \bu_{\Ic}(\bx)+a_j(\bx) < 0\ \ \forall j\notin\Ic \Big\}.\
        \end{align}
        \item\label{it:ustar-equals-uI} For every $\bx\in\mathcal R_{\Ic}$, one has $\bu^*(\bx)=\bu_{\Ic}(\bx)$.
\item\label{it:disjoint-intersection} For all $\Ic, \Jc \in \Pc([p])$ with $\Ic\neq\Jc$,  $\Rc_{\Ic}\cap\Rc_{\Jc} = \emptyset$.
        \item\label{it:union-of-regions} It holds that
        \(
        \Xc = \bigcup_{\Ic\in \Pc([p])} \mathcal R_{\Ic}.
        \)
    \end{enumerate}
\end{theorem}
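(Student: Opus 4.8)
The plan is to derive all four claims from the Karush--Kuhn--Tucker (KKT) conditions of the QP~\eqref{eq:multiple-cbf-based-controller}. Under (A1) the problem has a strictly convex objective and constraints affine in $\bu$, so for each $\bx\in\Xc$ the optimizer $\bu^*(\bx)$ is unique, the KKT conditions hold at it without any further constraint qualification, and conversely any point satisfying them is optimal. With multipliers $\lambda_i\ge0$, $i\in[p]$, the KKT system consists of stationarity $\bu=\bk(\bx)-\bR^{-1}\sum_i\lambda_i\bb_i(\bx)$, primal feasibility $\bb_i(\bx)^\top\bu+a_i(\bx)\le0$, and complementary slackness $\lambda_i(\bb_i(\bx)^\top\bu+a_i(\bx))=0$.

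For~\ref{it:active-set-region} and~\ref{it:ustar-equals-uI} I would prove both inclusions between $\Rc_\Ic=\{\bx\in\Xc:\Ic^*(\bx)=\Ic\}$ and the set on the right-hand side of~\eqref{eq:R-Ic-expression}. For the forward inclusion, fix $\bx$ with $\Ic^*(\bx)=\Ic$: by (A3) the columns $\{\bb_i(\bx)\}_{i\in\Ic}$ are independent, so $\rank\bB_\Ic(\bx)=|\Ic|$ and $\bB_\Ic(\bx)^\top\bR^{-1}\bB_\Ic(\bx)\succ0$; complementary slackness and the strict inactivity of the constraints $j\notin\Ic$ kill the corresponding multipliers, so stationarity becomes $\bu^*(\bx)=\bk(\bx)-\bR^{-1}\bB_\Ic(\bx)\blambda$ with $\blambda=(\lambda_i)_{i\in\Ic}\ge0$; substituting this into the activity equations $\bB_\Ic(\bx)^\top\bu^*(\bx)+\ba_\Ic(\bx)=\bzero$ and inverting $\bB_\Ic(\bx)^\top\bR^{-1}\bB_\Ic(\bx)$ identifies $\blambda$ with the explicit formula $\blambda_\Ic(\bx)$ from~\eqref{eq:lambda}, whence $\blambda_\Ic(\bx)\ge0$ and $\bu^*(\bx)=\bu_\Ic(\bx)$; finally, strict inactivity of each $j\notin\Ic$ reads $\bb_j(\bx)^\top\bu_\Ic(\bx)+a_j(\bx)<0$, so $\bx$ lies in the right-hand side of~\eqref{eq:R-Ic-expression}. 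For the reverse inclusion, given $\bx$ satisfying the three conditions of~\eqref{eq:R-Ic-expression}, I would take the candidate primal--dual pair $(\bu_\Ic(\bx),\blambda)$ with $\lambda_i=(\blambda_\Ic(\bx))_i$ for $i\in\Ic$ and $\lambda_j=0$ otherwise, and verify the KKT system directly: stationarity and $\bB_\Ic(\bx)^\top\bu_\Ic(\bx)+\ba_\Ic(\bx)=\bzero$ hold by the definitions in~\eqref{eq:lambda}, dual feasibility is the hypothesis $\blambda_\Ic(\bx)\ge0$, and primal feasibility holds with equality on $\Ic$ and strict inequality off $\Ic$ by the third hypothesis; sufficiency of KKT then yields $\bu^*(\bx)=\bu_\Ic(\bx)$, and inspecting which constraints are active gives $\Ic^*(\bx)=\Ic$. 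This argument simultaneously establishes~\ref{it:ustar-equals-uI}.

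Claims~\ref{it:disjoint-intersection} and~\ref{it:union-of-regions} then follow quickly. Disjointness is immediate from the defining characterization $\Rc_\Ic=\{\bx:\Ic^*(\bx)=\Ic\}$, since the unique optimizer $\bu^*(\bx)$ determines a unique active set $\Ic^*(\bx)$, so $\bx$ cannot lie in two regions with distinct index sets. For the covering, $\bigcup_{\Ic\in\Pc([p])}\Rc_\Ic\subseteq\Xc$ is trivial, and conversely (A3) guarantees every $\bx\in\Xc$ admits an optimizer, so $\Ic^*(\bx)\in\Pc([p])$ is well defined and $\bx\in\Rc_{\Ic^*(\bx)}$.

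The main obstacle is simply careful bookkeeping in the forward inclusion, especially pinpointing where each hypothesis enters: (A1) for uniqueness of $\bu^*$ and for sufficiency of the KKT conditions, the affineness of the constraints in $\bu$ for their necessity, and (A3)/LICQ precisely at the step where $\bB_\Ic(\bx)^\top\bR^{-1}\bB_\Ic(\bx)$ is inverted --- without full column rank of $\bB_\Ic(\bx)$ the multiplier need not be unique and could not be identified with $\blambda_\Ic(\bx)$, which is also why the rank condition must appear explicitly in~\eqref{eq:R-Ic-expression}. The computations in the reverse inclusion are routine substitutions into~\eqref{eq:lambda} and carry no subtlety.
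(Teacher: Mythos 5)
Your proposal is correct and follows essentially the same route as the paper: KKT characterization of the unique optimizer, both inclusions for~(\ref{it:active-set-region})--(\ref{it:ustar-equals-uI}) via eliminating inactive multipliers and inverting $\bB_{\Ic}(\bx)^\top\bR^{-1}\bB_{\Ic}(\bx)$ in one direction and verifying a candidate primal--dual pair in the other, with~(\ref{it:disjoint-intersection}) and~(\ref{it:union-of-regions}) following from uniqueness and existence of $\Ic^*(\bx)$. Your remark that KKT necessity already holds by affineness of the constraints in $\bu$ (rather than needing LICQ) is a minor sharpening but does not change the argument.
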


\begin{proof}
Fix $\bx\in\Xc$.  Given the strict convexity  of~\eqref{eq:multiple-cbf-based-controller} and LICQ assumptions (A1 and A3), $\bu^*\in \real^n$ is an optimal solution if and only if there exists a corresponding $\blambda^*\in\real^p$ satisfying the KKT conditions for  the QP \eqref{eq:multiple-cbf-based-controller}:
\begin{align}
&\text{(stationarity)}\quad \bR (\bu^*-\bk(\bx))+\bB_{[p]}(\bx)\blambda^*=0, \label{eq:KKT-stationarity}\\
&\text{(primal feasibility)}\quad \bb_i(\bx)^\top\bu^*+a_i(\bx)\le 0,\ \ i\in[p], \label{eq:KKTprimal}\\
&\text{(dual feasibility)}\quad \lambda_i^*\ge 0,\ \ i\in[p], \label{eq:KKTdual}\\
&\text{(complementarity)}\quad \lambda_i^*\big(\bb_i(\bx)^\top\bu^*+a_i(\bx)\big)=0,\ \ i\in[p]. \label{eq:KKTcompslackness}
\end{align}

Assume that $\Ic^*(\bx)=\Ic$. Then  $\bB_\Ic(\bx)$ must be full rank from the LICQ assumption (A3). In addition, from~\eqref{eq:KKTcompslackness}, we have $\blambda^*_i=0$ for all $i\not\in\Ic$ because their corresponding constraints are inactive. At the same time, for active constraints where $i\in\Ic$, the primal feasibility condition~\eqref{eq:KKTprimal} holds with equality. As such, together with~\eqref{eq:KKT-stationarity}, we solve the Lagrange multiplier $\blambda^*_i$  for indices $i\in\Ic$ to be precisely the components of $\blambda_{\Ic}(\bx)$, and the optimal control $\bu^*(\bx)$ to be precisely $\bu_\Ic(\bx)$. Since $\bu_\Ic(\bx)$ is an optimal solution,  $\Ic = \Ic^*(\bx)$ is the set of active constraints, so the inactive constraints $\bb_j(\bx)^\top \bu_{\Ic}(\bx)+a_j(\bx) < 0, \forall j\notin\Ic$, must be strict inequalities.

On the other hand, assume that $\bx\in\Xc$ satisfies $\rank(\bB_{\Ic}(\bx))=|\Ic|$, $\blambda_{\Ic}(\bx) \geq 0$, and $\bb_j(\bx)^\top \bu_{\Ic}(\bx) + a_j(\bx) < 0$ for all $j\notin\Ic$.
We propose a candidate Lagrange multiplier $\blambda^*\in\real^p$ with the following components:
\begin{align*}
\lambda_{i_j}^*:=
\big(\blambda_{\Ic}(\bx)\big)_j,~i_j\in\Ic \quad \textup{and} \quad
\lambda_{i}^* := 0, i\notin\Ic.
\end{align*}
This choice meets the dual feasibility condition as we have assumed $\blambda_\Ic(\bx)\geq 0$. We can then verify the remaining KKT conditions to conclude that $\bu_\Ic(\bx)$ is the unique (by (A1)) optimal solution, that is, $\bu^*(\bx)=\bu_\Ic(\bx)$. In addition, since $\bu_\Ic(\bx)$ solves primal feasibility with equality for $i\in\Ic$ and $\bb_j(\bx)^\top \bu_{\Ic}(\bx)+a_j(\bx) < 0\ \ \forall j\notin\Ic$ by assumption, the active index set is $\Ic^*(\bx)=\Ic$ by definition.

Thus, we have proven~(\ref{it:active-set-region}) and~(\ref{it:ustar-equals-uI}). For~(\ref{it:disjoint-intersection}), since the index set $\Ic^*(\bx)$ is uniquely defined for each $\bx$, it follows that $\Rc_{\Ic}\cap\Rc_{\Jc} = \emptyset$ for any $\Ic, \Jc \subseteq [p]$. Finally for~(\ref{it:union-of-regions}), the LICQ assumption (A3) implicitly implies that an optimal solution exists at each $\bx\in\Xc$, so $\Ic^*(\bx)$ is well-defined on $\Xc$. That is, for each $\bx\in\Xc$, there exists a $\Ic\subseteq[p]$ such that $\bx\in \Rc_\Ic$, concluding the proof.
\end{proof}

Theorem~\ref{thm:piecewise-solution-parametric-qp} shows that within each region $\Rc_{\Ic}$, the controller $\bu^*$ admits a closed-form expression given in~\eqref{eq:lambda}. In addition, the theorem characterizes each region $\Rc_{\Ic}$ without requiring the explicit computation of the active set of $\Ic^*(\bx)$. In principle, one could determine the region containing a given state $\bx$ by checking all possible $\Rc_\Ic$, but the number of such sets grows combinatorially with the number of constraints $p$. 
However, the characterization of $\Rc_{\Ic}$ in the theorem provides a mechanism to verify membership to a specific region and will be 
leveraged in our resource-aware implementation later in Section~\ref{sec:resource-aware-implementation}.
\smallskip

\begin{remark}
\label{rem:general-parametric-qps}
    The results in Theorem~\ref{thm:piecewise-solution-parametric-qp} apply to several variations beyond the standard CBF-based QP~\eqref{eq:multiple-cbf-based-controller}. These include, but are not limited to, designs that use:
    \begin{enumerate}
        \item control Lyapunov function-based constraints~\citep{EDS:98} or affine input constraints in order to enforce stability or input saturation bounds, respectively;
        \item high-order or exponential CBFs (when $h$ has relative degree higher than one)~\citep{WX-CB:22,QN-KS:16};
        \item robust and adaptive CBF conditions that account for uncertainty in the system dynamics~\citep{XX-PT-JWG-ADA:15,XW-CB-CGC:22};
        \item CBF conditions that ensure safety under sample-and-hold controllers~\citep{JB-KG-DP:22};
        \item\label{it:cbf-conditions-slack} CBF conditions with added slack variables in order to ensure feasibility~\citep{JL-JK-ADA:23,ADA-SC-ME-GN-KS-PT:19}
    \end{enumerate}
    In all of these cases, the constraints can be expressed as affine inequalities in the control input, leading to a parametric QP and enabling the use of the closed-form solution discussed in Theorem~\ref{thm:piecewise-solution-parametric-qp}.
    \demo
\end{remark}

\smallskip


\begin{remark}
\label{rem:comparison-existing-literature}
    The closed-form expression provided in Theorem~\ref{thm:piecewise-solution-parametric-qp} coincides with those provided in the case of one or two constraints~\citep{XT-DVD:24,XX-PT-JWG-ADA:15}, or the one given in~\citep{MHC-EL-ADA:25} for box constraints on outputs with well-defined relative degree, under the choice of matrix $\bR$ detailed therein.
    \demo
\end{remark}

An additional benefit of the closed-form in Theorem~\ref{thm:piecewise-solution-parametric-qp} is that it enables detailed analysis of the regularity of $\bu^*$. 
\smallskip

\begin{corollary}
\label{prop:lipschitz-general}
Under Assumption~\ref{ass:assum1}:
\begin{enumerate}[(i)]
\item\label{it:on-region-smoothness} \textit{(On-region smoothness)} On each active–set interior $\text{Int}(\mathcal R_{\Ic})$,
$\bu^*(\bx)=\bu_{\Ic}(\bx)$ from \eqref{eq:lambda} and the map $\bx\mapsto \bu^*(\bx)$ is $C^1$.
\item\label{it:local-Lipschitzness} \textit{(Local Lipschitzness)} $\bu^*$ is locally Lipschitz on $\Xc$.
\end{enumerate}
\end{corollary}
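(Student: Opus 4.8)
The plan is to derive part~(\ref{it:on-region-smoothness}) directly from Theorem~\ref{thm:piecewise-solution-parametric-qp} and the algebraic form~\eqref{eq:lambda}, and to obtain part~(\ref{it:local-Lipschitzness}) by gluing these on-region expressions across active-set changes. For part~(\ref{it:on-region-smoothness}), Theorem~\ref{thm:piecewise-solution-parametric-qp}(\ref{it:ustar-equals-uI}) already gives $\bu^*(\bx)=\bu_{\Ic}(\bx)$ on all of $\Rc_{\Ic}$, hence in particular on $\text{Int}(\Rc_{\Ic})$, so it remains only to show that $\bx\mapsto\bu_{\Ic}(\bx)$ in~\eqref{eq:lambda} is $C^1$ there. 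I would work on the open set $\Oc_{\Ic}:=\setdef{\bx\in\Xc}{\rank(\bB_{\Ic}(\bx))=|\Ic|}$, which is open because $\rank(\bB_{\Ic}(\cdot))=|\Ic|$ is equivalent to nonvanishing of the continuous function $\bx\mapsto\det\big(\bB_{\Ic}(\bx)^\top\bR^{-1}\bB_{\Ic}(\bx)\big)$. On $\Oc_{\Ic}$ this Gram matrix is invertible, and since $\bk(\cdot),\bB_{\Ic}(\cdot),\ba_{\Ic}(\cdot)$ are $C^1$ by~(A2), $\bR^{-1}$ is constant, and matrix inversion is $C^{\infty}$ on the open set of invertible matrices, the composition defining $\blambda_{\Ic}$ and then $\bu_{\Ic}$ in~\eqref{eq:lambda} is $C^1$ on $\Oc_{\Ic}$. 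Since the rank condition in~\eqref{eq:R-Ic-expression} yields $\text{Int}(\Rc_{\Ic})\subseteq\Rc_{\Ic}\subseteq\Oc_{\Ic}$ and $\Oc_{\Ic}$ is open, $\bu^*=\bu_{\Ic}$ is $C^1$ on $\text{Int}(\Rc_{\Ic})$.

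For part~(\ref{it:local-Lipschitzness}), the cleanest route is to invoke the known fact that a strongly convex parametric QP with $C^1$ data satisfying LICQ has a solution map that is locally Lipschitz in the parameter (see~\citep{PM-AA-JC:25-ejc}); Assumption~\ref{ass:assum1} supplies exactly these hypotheses. For a more self-contained argument I would proceed in three steps. First, show $\bu^*$ is continuous on $\Xc$, which follows from strong convexity of the objective in~\eqref{eq:multiple-cbf-based-controller} and continuity (in the set-valued sense, guaranteed by LICQ and~(A2)) of the feasible-set map $\bx\mapsto\setdef{\bu}{\bB_{[p]}(\bx)^\top\bu+\ba_{[p]}(\bx)\le0}$, via a Berge-type argument. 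Second, fix $\bx_0\in\Xc$; since $\Pc([p])$ is finite, the index sets with $\bx_0\in\overline{\Rc_{\Ic}}$ form a finite family $\Sset$, and there is a ball $B\ni\bx_0$ with $B\subseteq\bigcup_{\Ic\in\Sset}\Rc_{\Ic}$; using local boundedness of $\bu^*$ (from its continuity) one argues that the relevant Gram matrices stay nonsingular on $B$ (after shrinking $B$), so every $\bu_{\Ic}$ with $\Ic\in\Sset$ is $C^1$, hence $L_{\Ic}$-Lipschitz, on $B$. Third, glue: for $\bx,\by\in B$ write $\bu^*(\bx)=\bu_{\Ic}(\bx)$ and $\bu^*(\by)=\bu_{\Jc}(\by)$ with $\Ic,\Jc\in\Sset$, and bound $\norm{\bu^*(\bx)-\bu^*(\by)}$ using the constants $\max_{\Ic\in\Sset}L_{\Ic}$, the fact that all candidates $\bu_{\Ic}$, $\Ic\in\Sset$, share the common value $\bu^*(\bx_0)$ at $\bx_0$, and a path argument along the segment $[\bx,\by]$ that exploits the finiteness of $\Sset$.

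The step I expect to be the main obstacle is the last one in part~(\ref{it:local-Lipschitzness}): passing from ``bounded difference quotients at $\bx_0$'' to a genuine local Lipschitz estimate on a full neighborhood. With infinitely many pieces, continuity together with piecewise smoothness would not suffice (as illustrated by the Cantor function), so one must use the finiteness of $\Pc([p])$ in an essential way and, in particular, confirm that the constraint gradients $\{\bb_i\}_{i\in\Ic}$ do not degenerate as the state crosses from $\Rc_{\Ic}$ into an adjacent region --- precisely the role of LICQ. For this reason I would present part~(\ref{it:local-Lipschitzness}) with~\citep{PM-AA-JC:25-ejc} as the primary justification and offer the gluing sketch only as an alternative. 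Part~(\ref{it:on-region-smoothness}) poses no real difficulty beyond being careful to establish $C^1$-ness on the open set $\text{Int}(\Rc_{\Ic})$ rather than on $\Rc_{\Ic}$ itself.
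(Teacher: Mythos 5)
Your proposal is correct and follows essentially the same route as the paper: part (i) is read off from the closed-form \eqref{eq:lambda} together with the $C^1$ data in (A2) (your extra care about the openness of the set where the Gram matrix $\bB_{\Ic}(\bx)^\top\bR^{-1}\bB_{\Ic}(\bx)$ is invertible is a welcome refinement the paper glosses over), and part (ii) is delegated to the known Lipschitz-continuity result for parametric QPs under LICQ, which the paper cites as \citep{SMR:80} while you cite \citep{PM-AA-JC:25-ejc} --- the same fact either way. Your self-contained gluing sketch for (ii) is not needed and, as you note yourself, would require additional work to close; relying on the citation, as the paper does, is the intended argument.
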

\begin{proof}
To prove~(\ref{it:on-region-smoothness}), fix $\Ic$ and $\bx\in\text{Int}(\Rc_{\Ic})$. Since $\bx$ is in the interior of $\Rc_{\Ic}$, there exists a neighborhood $\Nc_{\bx}$ of $\bx$ where $\bu^*(\by) = \bu_{\Ic}(\by)$ for all $\by\in\Nc_{\bx}$.
Since $\bk, \{ L_{\bff}h_i, L_{\bg}h_i, \alpha(h_i) \}_{i\in[p]}$, are $C^1$, from~\eqref{eq:lambda} it follows that $\bu^*$ is $C^1$ at $\bx$.
On the other hand,~(\ref{it:local-Lipschitzness}) follows by~\citep{SMR:80} and because LICQ holds by
Assumption~\ref{ass:assum1}.
\end{proof}

Additional desirable regularity properties for CBF-QPs beyond Lipschitzness are discussed in~\citep{PM-AA-JC:25-ejc}. Having an explicit expression for~\eqref{eq:multiple-cbf-based-controller} paves the way for analyzing such properties. Here, it allows us to characterize the smoothness of $\bu^*$ within each region $\Rc_{\Ic}$ and to obtain sharper bounds on its Lipschitz constant. In particular, the Lipschitz constant can be computed directly for systems where the following assumptions hold.

\smallskip

\begin{assumption}\label{as:constant-affine-data}
    The functions $\{ \bb_i(\bx) \}$ are constant on $\Xc$, and $a_i, \bk$ are affine, i.e., 
    $\bb_i(\bx) \equiv \bb_i \in \real^m$,
    $a_i(\bx) = \bbgamma_i^\top \bx + \eta_i$, for some $\bbgamma_i\in\real^n$ and $\eta_i\in\real$,
    and $\bk(\bx) = \bK\bx + \bbkappa$ for all $\bx\in\Xc$.
\end{assumption}

Assumption~\ref{as:constant-affine-data} holds, for example, if the dynamics~\eqref{eq:control-affine-system} are linear and the constraints in~\eqref{eq:multiple-cbf-based-controller} arise from affine CBFs (or high-order CBFs~\citep{WX-CB:22}).
The next result shows that under Assumption~\ref{as:constant-affine-data}, the controller $\bu^*$ is piecewise affine and admits an explicit Lipschitz constant.
\smallskip

\begin{corollary}\longthmtitle{Piecewise affine structure}
\label{cor:piecewise-affine-structure}
    Under
Assumptions~\ref{ass:assum1} and~\ref{as:constant-affine-data}, for any $\Ic = \{ i_1, \hdots, i_k \} \subset [p]$ with $\bB_{\Ic} := [ \bb_{i_1}, \hdots, \bb_{i_k}]$ full column rank, define: 
    \begin{align*}
        \bH_{\Ic} := \bB_{\Ic}^\top \bR^{-1} \bB_{\Ic} \succ 0, \ \boldsymbol{\Gamma}_{\Ic}:={\small\begin{bmatrix} \bbgamma_{i_1}^\top\\ \vdots\\ \bbgamma_{i_k}^\top\end{bmatrix}}, \
        \boldsymbol{\eta}_{\Ic}:={\small\begin{bmatrix}\eta_{i_1}\\ \vdots\\ \eta_{i_k}\end{bmatrix}}.
    \end{align*}
    Then, the KKT quantities in~\eqref{eq:lambda} are affine in $\bx$: 
    \begin{align*}
    \blambda_{\Ic}(\bx) = \bG_{\Ic} \bx + \bg_{\Ic}, \quad 
        \bu_{\Ic}(\bx) = \bK_{\Ic} \bx + \bbkappa_{\Ic},
    \end{align*}
    where: 
    \begin{align*}
        &\bG_{\Ic} := \bH_{\Ic}^{-1}(\bB_{\Ic}^\top \bK + \boldsymbol{\Gamma}_{\Ic}), \quad \bg_{\Ic} = \bH_{\Ic}^{-1}(\bB_{\Ic}^\top \bbkappa + \boldsymbol{\eta}_{\Ic}), \\
        &\bK_{\Ic} = \bK - \bR^{-1} \bB_{\Ic} \bG_{\Ic}, \quad \bbkappa_{\Ic} = \bbkappa - \bR^{-1} \bB_{\Ic} \bg_{\Ic}.
    \end{align*}
    Hence, each region $\Rc_{\Ic}$ is a polyhedron:
    \begin{align}\label{eq:R_I-polyhedron}
        \notag
        &\Rc_{\Ic} = \setdef{\bx\in\Xc}{ \bG_{\Ic} \bx + \bg_{\Ic} \geq 0, \\
        &\qquad \quad  \bb_j^\top (\bK_{\Ic} \bx + \bbkappa_{\Ic}) + \bbgamma_j^\top \bx + \eta_j < 0, \forall j\notin \Ic },
    \end{align}
    and for all $\bx\in\Rc_{\Ic}$, we have $\bu^*(\bx) = \bK_{\Ic} \bx + \boldsymbol{\kappa}_{\Ic}$.
    Furthermore, for any convex subset $\bar{\Xc} \subset \Xc$, $\bu^*$ is Lipschitz in $\bar{\Xc}$ with a constant 
    $L:= \max\limits_{ \Ic : \Rc_{\Ic} \cap \bar{\Xc} \neq \emptyset } \norm{\bK_{\Ic}}$.
\end{corollary}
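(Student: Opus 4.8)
The plan is to substitute the affine data postulated in Assumption~\ref{as:constant-affine-data} into the closed-form solution of Theorem~\ref{thm:piecewise-solution-parametric-qp}, which turns every relevant object into an affine function of $\bx$, and then to promote the resulting per-region affine estimates to a single global Lipschitz bound on $\bar\Xc$ by exploiting convexity.

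\textbf{Step 1 (affine KKT quantities).} Fix $\Ic=\{i_1,\dots,i_k\}\subset[p]$ with $\bB_\Ic$ of full column rank. Because $\bb_i(\bx)\equiv\bb_i$, the matrix $\bH_\Ic=\bB_\Ic^\top\bR^{-1}\bB_\Ic$ is constant, and it is positive definite (hence invertible) since $\bR^{-1}\succ0$ and $\bB_\Ic$ has full column rank. I would then expand the two identities in~\eqref{eq:lambda} using $\ba_\Ic(\bx)=\boldsymbol{\Gamma}_\Ic\bx+\boldsymbol{\eta}_\Ic$ and $\bB_\Ic^\top\bk(\bx)=\bB_\Ic^\top\bK\bx+\bB_\Ic^\top\bbkappa$, and collect the terms linear in $\bx$ and the constant terms; this yields $\blambda_\Ic(\bx)=\bG_\Ic\bx+\bg_\Ic$ and $\bu_\Ic(\bx)=\bK_\Ic\bx+\bbkappa_\Ic$ with exactly the stated matrices. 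This step is a routine computation. \textbf{Step 2 (polyhedral regions and the PWA formula).} Substituting the affine expressions from Step~1 into the characterization~\eqref{eq:R-Ic-expression}: the rank condition $\rank(\bB_\Ic(\bx))=|\Ic|$ is now independent of $\bx$ and holds precisely for the full-column-rank $\Ic$ under consideration; $\blambda_\Ic(\bx)\ge0$ becomes the affine system $\bG_\Ic\bx+\bg_\Ic\ge0$; and each inactive inequality $\bb_j^\top\bu_\Ic(\bx)+a_j(\bx)<0$ becomes the affine strict inequality appearing in~\eqref{eq:R_I-polyhedron}. Hence $\Rc_\Ic$ is the intersection of $\Xc$ with a polyhedron, and by Theorem~\ref{thm:piecewise-solution-parametric-qp}(ii), $\bu^*(\bx)=\bu_\Ic(\bx)=\bK_\Ic\bx+\bbkappa_\Ic$ on $\Rc_\Ic$. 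Moreover, the LICQ hypothesis in Assumption~\ref{ass:assum1} forces $\{\bb_i\}_{i\in\Ic^*(\bx)}$ to be linearly independent, so $\Rc_\Ic=\emptyset$ whenever $\bB_\Ic$ is rank-deficient; only finitely many (full-column-rank) index sets therefore give nonempty regions, and by Theorem~\ref{thm:piecewise-solution-parametric-qp}(iii)--(iv) these finitely many polyhedra partition $\Xc$.

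\textbf{Step 3 (global Lipschitz constant).} Let $\bar\Xc\subset\Xc$ be convex, take $\bx,\by\in\bar\Xc$, and set $\gamma(t)=(1-t)\bx+t\by$ for $t\in[0,1]$; by convexity $\gamma(t)\in\bar\Xc$ for all $t$. For each $\Ic$, the set $T_\Ic=\{t\in[0,1]:\gamma(t)\in\Rc_\Ic\}$ is the solution set in the scalar $t$ of a finite system of (weak and strict) affine inequalities, hence an interval; by Theorem~\ref{thm:piecewise-solution-parametric-qp}(iii)--(iv) the nonempty $T_\Ic$ are pairwise disjoint, finitely many, and cover $[0,1]$. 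Ordering them from left to right produces breakpoints $0=s_0\le s_1\le\cdots\le s_N=1$ such that on each open sub-interval $(s_{m-1},s_m)$ one has $\gamma(t)\in\Rc_{\Ic_m}$ for some full-column-rank $\Ic_m$ with $\Rc_{\Ic_m}\cap\bar\Xc\ne\emptyset$, and therefore $\bu^*(\gamma(t))=\bK_{\Ic_m}\gamma(t)+\bbkappa_{\Ic_m}$. Since $\bu^*$ is continuous on $\Xc$ by Corollary~\ref{prop:lipschitz-general}(ii), this affine identity extends by continuity to the closed interval $[s_{m-1},s_m]$, so $\norm{\bu^*(\gamma(s_m))-\bu^*(\gamma(s_{m-1}))}=\norm{\bK_{\Ic_m}\,(\gamma(s_m)-\gamma(s_{m-1}))}\le\norm{\bK_{\Ic_m}}\,(s_m-s_{m-1})\,\norm{\by-\bx}\le L\,(s_m-s_{m-1})\,\norm{\by-\bx}$. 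Summing over $m$, applying the triangle inequality, and using $\sum_m(s_m-s_{m-1})=1$ gives $\norm{\bu^*(\by)-\bu^*(\bx)}\le L\,\norm{\by-\bx}$, which is the claim.

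\textbf{Main obstacle.} Steps~1 and~2 are essentially bookkeeping; the only genuine content is Step~3. Per-region affineness alone yields only a \emph{piecewise} Lipschitz estimate, and obtaining a single global constant equal to $\max_\Ic\norm{\bK_\Ic}$ with no extra factor relies on three ingredients working together: convexity of $\bar\Xc$, so the connecting segment never leaves the domain; continuity of $\bu^*$ from Corollary~\ref{prop:lipschitz-general}(ii), to glue the affine pieces across region boundaries; and the collinearity of the sub-segments of $[\bx,\by]$ (all positive multiples of $\by-\bx$), which is exactly what prevents the triangle inequality from inflating the constant as the trajectory weaves through several regions. I would also include a brief remark that degenerate, lower-dimensional regions cause no difficulty, since along the segment they appear as (possibly single-point) intervals and are absorbed by the same continuity argument.
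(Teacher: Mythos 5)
Your proposal is correct and follows essentially the same route as the paper: affine substitution into \eqref{eq:lambda} and \eqref{eq:R-Ic-expression}, then the segment-decomposition argument combining convexity of $\bar\Xc$, continuity of $\bu^*$, and collinearity of the sub-segments to avoid inflating the constant. Your Step~3 is in fact slightly more careful than the paper's proof, which allows the number of breakpoints $M$ to be infinite, whereas you justify finiteness by observing that each $T_\Ic$ is an interval and there are only finitely many index sets.
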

\begin{proof}
    The expression for $\bu^*$ follows directly from Theorem~\ref{thm:piecewise-solution-parametric-qp} and the expressions in Assumption~\ref{as:constant-affine-data}. To show that $\bu^*$ is $L$-Lipschitz in $\bar{\Xc}\subset\Xc$, take any $\bx,\by\in\bar{\Xc}$ and consider the line segment $\ell(s)=\bx+s(\by-\bx)$, $s\in[0,1]$.
    Note that since $\bar{\Xc}$ is convex,  $\ell(s)\in\bar{\Xc}$ for all $s\in[0,1]$.
    Let $\{ s_r \}_{r=0}^M$ (with $s_0 = 0$ and $s_1 = 1$) be the sequence of values of the parameter $s$ at which there is a region change (note that $M$ can either be finite or infinite).
    This means that for every $r\in\{ 0, \hdots, M-1 \}$, the segment $\{ \bx + s(\by-\bx) \}_{ s\in[ t_{r}, t_{r+1} ) }$ belongs to the same region (say $\Rc_{\Ic_r}$).
    Hence, $\bu^*(\ell(s)) = \bu_{\Ic_r}(\ell(s))$ for all $s \in [ s_{r}, s_{r+1} )$.
    Then, since $\bu^*$ is continuous at every $\{ \ell(s_r) \}_{r=0}^M$,
    $\|\bu^*(\ell(s_{r+1}))-\bu^*(\ell(s_r))\| = \bK_{\Ic_r} \norm{ \ell(s_{r+1})-\ell(s_r) }
    \le L \norm{ \ell(s_{r+1})-\ell(s_r) }$.
    Summing over $r=0,\dots,M-1$ and using the triangle inequality gives
    \begin{align*}
        \norm{\bu^*(\by)-\bu^*(\bx)}
        \le L\,\sum_{r=0}^M \norm{\ell(s_{r+1})-    \ell(s_r) }
        = L\,\norm{\by-\bx}.
    \end{align*}
    Hence $\bu^*$ is $L$-Lipschitz in $\bar\Xc$.
\end{proof}

Corollary~\ref{cor:piecewise-affine-structure} shows that under Assumption~\ref{as:constant-affine-data}, the controller $\bu^*$ is affine in each region $\Rc_{\Ic}$. In particular, this enables the use of linear control techniques to analyze the properties of the closed-loop dynamics (such as equilibria and their stability properties) within each region.
Additionally, Corollary~\ref{cor:piecewise-affine-structure} provides a tight Lipschitz constant for $\bu^*$ in any convex subset $\bar{\Xc}\subset\Xc$.
Knowledge of such Lipschitz constant is useful, for example, when implementing $\bu^*$ in a sampled-data fashion (cf. Section~\ref{sec:resource-aware-implementation}).

Another important feature of the controller $\bu^*$ under Assumption~\ref{as:constant-affine-data} is that the parameters $\bK_{\Ic}$ and $\bbkappa_{\Ic}$ defining the controller in each region $\Ic$ can be precomputed offline. 
Hence, the online implementation of $\bu^*$ in region $\Ic$ is straightforward. This is in contrast to the general case discussed in Theorem~\ref{thm:piecewise-solution-parametric-qp}, where the implementation of the controller $\bu^*$ requires the computation of the inverse of the matrix $\bB_{\Ic}(\bx)^\top \bR^{-1} \bB_{\Ic}(\bx)$ at every state $\bx\in\Rc_{\Ic}$.

\section{Resource-Aware CBF-QP Implementation}\label{sec:resource-aware-implementation}

Although Theorem~\ref{thm:piecewise-solution-parametric-qp} provides a closed-form expression for the controller $\bu^*$ in each region $\Rc_{\Ic}$, implementing this controller at a state $\bx\in\Xc$ requires knowing the active index set $\Ic^*(\bx)$.
Nevertheless, once $\Ic^*(\bx)$ is identified at a particular state, we may apply the closed-form controller $\bu_{\Ic}$ as long as the state remains within the region $\Rc_\Ic$. The results of Theorem~\ref{thm:piecewise-solution-parametric-qp} allow us to detect when the state trajectory exits this region. 
In this section, we formalize this idea and propose a practical, resource-aware implementation of the explicit expression of $\bu^*$.

Formally, we assume access to a computational method $\Theta:\Xc\to\Pc([p])$ that returns $\Ic^*(\bx)$ for any given state $\bx\in\Xc$.
This function $\Theta$ can be obtained through different methods such as:
\begin{itemize}
    \item Enumerating all active sets $\Ic \subset [p]$ and checking the conditions defining $\Rc_{\Ic}$ in~\eqref{eq:R-Ic-expression};
    \item Using a numerical solver to solve~\eqref{eq:multiple-cbf-based-controller} and extracting the indices of active constraints at the solution;
    \item Using a pre-trained classifier (e.g., a neural network).
\end{itemize}

Although the function $\Theta$ produces an exact prediction of the active set $\Ic^*(\bx)$ and enables the closed-form computation of the controller $\bu^*$ via Theorem~\ref{thm:piecewise-solution-parametric-qp},
evaluating $\Theta$ is often computationally expensive. Our goal is, therefore, to minimize the number of calls to $\Theta$ and obtain a resource-aware implementation of $\bu^*$.

We implement the controller in a sample-and-hold fashion. Let $\Delta T >0$ denote the sampling period. We compute the control input periodically at time instants $\{t_k\}_{k\in\mathbb{N}}$ with $t_k = k\Delta T$. For time $t\in[t_k,t_{k+1})$, the control input is held constant as $\bu(t) = \bu_k := \bu(t_k)$. In this setting, we develop an algorithm to reduce the number of sampling instants $t_k$ at which the function $\Theta$ is called.

\medskip
\begin{remark}{\rm
    In practice, the controller $\bu^*$ is implemented in a sample-and-hold fashion since digital platforms only permit discrete-time updates. The discrepancy of the trajectories between continuous-time and sample-and-hold implementation of  $\bu^*$ can be bounded using the Lipschitz constant of $\bu^*$ (which is now relatively easier to obtain using the closed-form of the QP, e.g., Corollary~\ref{cor:piecewise-affine-structure}) together with the sampling interval~$\Delta T$ via results such as ~\cite[Theorem 1]{AJT-VDD-RKC-YY-ADA:22}. The discrepancy becomes smaller as the Lipschitz constant and $\Delta T$ become smaller. We note also that although a sampled-data implementation of $\bu^*$ might violate the safety constraints during inter-event times, the CBF constraints in~\eqref{eq:multiple-cbf-based-controller} can be modified as in~\citep{JB-KG-DP:22,AS-YC-ADA:20,AJT-VDD-RKC-YY-ADA:22} to remedy that and ensure safety during inter-event times.} 
    \demo
\end{remark}

Algorithm~\ref{alg:resource-aware-explicit-QP} formalizes our proposed procedure for computing the controller $\bu_k$ at each sampling instant $t_k$:
\begin{algorithm}[t]
\caption{Resource-Aware Explicit CBF-QP}
\label{alg:resource-aware-explicit-QP}
\begin{algorithmic}[1]
\REQUIRE $\bx$, $\Ic$, $\bk$, $\bR$, $\bff$, $\bg$, $\{ h_i \}_{i=1}^p$.
\IF {$\rank(\bB_{\Ic}(\bx)) = |\Ic|$ and $S_{\Ic}^{(1)}(\bx) \geq 0$ and $S_{\Ic}^{(2)}(\bx) < 0$}
    \RETURN $\bu_{\Ic}(\bx)$, $\Ic$
\ELSE
    \STATE Let $\Ic^{\prime} = \Theta(\bx)$
    \RETURN $\bu_{\Ic^{\prime}}(\bx)$, $\Ic^{\prime}$
\ENDIF
\end{algorithmic}
\end{algorithm}

\emph{Lines 1 - 2:} Given a state $\bx$ and an index set $\Ic$, Line 1 verifies whether $\rank(\bB_{\Ic}(\bx)) = |\Ic|$ and the conditions $S_{\Ic}^{(1)}(\bar{\bx}) \geq 0$ and $S_{\Ic}^{(2)}(\bx) < 0$ are verified, where 
\begin{align}\label{eq:trigger-definition}
    S_\Ic^{(1)}(\bx) = \blambda_{\Ic}(\bx), \ S_\Ic^{(2)}(\bx) = \max\limits_{i\notin \Ic} \{ \bb_i(\bx)^\top \bu_{\Ic}(\bx) + a_i(\bx) \}.
\end{align}
If these conditions are satisfied, then by Theorem~\ref{thm:piecewise-solution-parametric-qp} item~\ref{it:active-set-region}, $\bx\in\Rc_{\Ic}$, and Line 2 returns the control input $\bu_{\Ic}(\bx)$ and the active set $\Ic$.

\emph{Lines 3 - 5:} If either $\rank(B_{\Ic}(\bx)) \neq |\Ic|$, $S_{\Ic}^{(1)}(\bx) < 0$ or $S_{\Ic}^{(2)}(\bx) \geq 0$, then $\bx\notin\Rc_{\Ic}$. In this case, we use the function $\Theta$ to obtain the active set at $\bx$, denoted $\Ic^{\prime}$ (Line 4), and return it along with the corresponding control input $\bu_{\Ic^{\prime}}(\bx)$ (cf. Line 5).

The following result shows that Algorithm~\ref{alg:resource-aware-explicit-QP} produces the correct controller at each sampling time.

\smallskip

\begin{proposition}\longthmtitle{Correctness of Resource-Aware Explicit CBF-QP}
\label{prop:correctness-resource-aware-explicit-cbf-qp}
    Consider the parametric QP \eqref{eq:multiple-cbf-based-controller} under Assumption~\ref{ass:assum1}.
    Let $\bx\in\Xc$ and $\Ic\subset[p]$.
    Then, Algorithm~\ref{alg:resource-aware-explicit-QP} outputs $(\bu^*(\bx), \Ic^*(\bx))$.
\end{proposition}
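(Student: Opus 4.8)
The plan is to show that Algorithm~\ref{alg:resource-aware-explicit-QP} returns $(\bu^*(\bx),\Ic^*(\bx))$ by a straightforward case split on the outcome of the \texttt{if}-test in Line~1, invoking Theorem~\ref{thm:piecewise-solution-parametric-qp} in each branch. The key observation is that the quantities checked in Line~1 --- namely $\rank(\bB_\Ic(\bx))=|\Ic|$, $S_\Ic^{(1)}(\bx)=\blambda_\Ic(\bx)\ge 0$, and $S_\Ic^{(2)}(\bx)=\max_{j\notin\Ic}\{\bb_j(\bx)^\top\bu_\Ic(\bx)+a_j(\bx)\}<0$ --- are precisely the defining conditions of the region $\Rc_\Ic$ in~\eqref{eq:R-Ic-expression}. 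So the \texttt{if}-test holds if and only if $\bx\in\Rc_\Ic$.

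First I would handle the case where the Line~1 test succeeds. Then $\bx\in\Rc_\Ic$ by the characterization~\eqref{eq:R-Ic-expression} in Theorem~\ref{thm:piecewise-solution-parametric-qp}\eqref{it:active-set-region}. By item~\eqref{it:ustar-equals-uI} of the same theorem, $\bu^*(\bx)=\bu_\Ic(\bx)$, which is what Line~2 returns for the control value. Moreover, membership $\bx\in\Rc_\Ic$ means by definition $\Ic^*(\bx)=\Ic$, so the index set returned by Line~2 is also correct. Hence the output is $(\bu^*(\bx),\Ic^*(\bx))$.

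Next I would handle the case where the Line~1 test fails. Then, again by the equivalence in~\eqref{eq:R-Ic-expression}, $\bx\notin\Rc_\Ic$, equivalently $\Ic^*(\bx)\neq\Ic$. In this branch the algorithm sets $\Ic'=\Theta(\bx)$ in Line~4; by the assumed specification of $\Theta$ (it returns $\Ic^*(\bx)$ for any $\bx\in\Xc$), we have $\Ic'=\Ic^*(\bx)$. By Theorem~\ref{thm:piecewise-solution-parametric-qp}\eqref{it:union-of-regions} together with the disjointness in~\eqref{it:disjoint-intersection}, $\bx$ lies in exactly one region, namely $\Rc_{\Ic^*(\bx)}=\Rc_{\Ic'}$; applying item~\eqref{it:ustar-equals-uI} with this index set gives $\bu^*(\bx)=\bu_{\Ic'}(\bx)$, which is exactly what Line~5 returns, along with $\Ic'=\Ic^*(\bx)$. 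So again the output is $(\bu^*(\bx),\Ic^*(\bx))$, completing the proof.

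This argument is essentially a direct bookkeeping of the branches, so there is no substantive obstacle; the only thing to be careful about is making explicit that the three scalar/rank conditions tested in Line~1 are logically equivalent to the set-membership condition $\bx\in\Rc_\Ic$ as written in~\eqref{eq:R-Ic-expression} (in particular that $S_\Ic^{(1)}(\bx)\ge 0$ is the vector inequality $\blambda_\Ic(\bx)\ge 0$ and $S_\Ic^{(2)}(\bx)<0$ is the conjunction $\bb_j(\bx)^\top\bu_\Ic(\bx)+a_j(\bx)<0$ over all $j\notin\Ic$), and that when $\rank(\bB_\Ic(\bx))<|\Ic|$ the quantities $\blambda_\Ic,\bu_\Ic$ are undefined but the test still correctly fails and routes to $\Theta$. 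One should also note that $\Rc_\Ic=\emptyset$ is possible for some $\Ic$, but this causes no issue since the test simply fails and the algorithm falls back on $\Theta$.
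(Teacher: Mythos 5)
Your proof is correct and follows essentially the same route as the paper's: a case split on the Line~1 test, using the equivalence between that test and membership $\bx\in\Rc_\Ic$ from Theorem~\ref{thm:piecewise-solution-parametric-qp} in the first branch, and the exactness of $\Theta$ in the second. Your added remarks on the degenerate cases (rank deficiency and $\Rc_\Ic=\emptyset$) are sensible but not needed beyond what the paper already argues.
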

\begin{proof}
    First, suppose that $\Ic = \Ic^*(\bx)$.
    In this case, by the characterization of $\Rc_{\Ic}$ in Theorem~\ref{thm:piecewise-solution-parametric-qp}, the if statement in Line 1 is true. Therefore, the algorithm returns $(\bu_{\Ic}(\bx), \Ic) = (\bu^*(\bx), \Ic^*(\bx))$ (cf. Line 2).
    Alternatively, suppose that $\Ic \neq \Ic^*(\bx)$.
    Then, by the the characterization of $\Rc_{\Ic}$ in Theorem~\ref{thm:piecewise-solution-parametric-qp}, the if statement in Line 1 is false.
    Since $\Ic^{\prime} = \Theta(\bx) = \Ic^*(\bx)$, the algorithm returns $(\bu_{\Ic^{\prime}}(\bx), \Ic^{\prime}) = (\bu^*(\bx), \Ic^*(\bx))$ (cf. Line 5).
\end{proof}

Proposition~\ref{prop:correctness-resource-aware-explicit-cbf-qp} shows that by executing Algorithm~\ref{alg:resource-aware-explicit-QP}, we may be able to bypass invoking $\Theta$ (e.g., solving the QP numerically) at every sampling time $t_k$ while still obtaining the controller $\bu^*(\bx(t_k))$ and the corresponding active set $\Ic^*(\bx(t_k))$. Instead, at each step, we may simply guess the active set $\Ic$ and run the algorithm with this guess. Verifying whether the guess is correct requires checking only for a single candidate (rather than all $\Ic\in[p]$), and  is therefore computationally inexpensive. 

The approach is most effective when the algorithm is called with a correct guess of $\Ic$. To encourage this, we leverage the fact that system trajectories evolve continuously and typically remain within a region $\Rc_\Ic$ for some time. Thus, feeding the previously identified active set $\Ic^\prime$ back as an input to the algorithm is a natural and often accurate choice. In practice, cf. Section~\ref{sec:simulations}, this leads to an order-of-magnitude reduction in the number of times $\Theta$ is called. Future work includes more advanced guessing strategies, e.g., via machine learning, to further improve efficiency.

\section{Simulations}\label{sec:simulations}

In this section, we present  simulations illustrating the applicability of the closed-form introduced in Theorem~\ref{thm:piecewise-solution-parametric-qp} and the resource-aware implementation in Algorithm~\ref{alg:resource-aware-explicit-QP}.

\subsection{Comparison of execution times}\label{sec:comparison-exec-times}

As mentioned in Section~\ref{sec:resource-aware-implementation}, the main difficulty of implementing the closed-form in Theorem~\ref{thm:piecewise-solution-parametric-qp} at an arbitrary state $\bx\in\Xc$ is that it requires knowing the active set $\Ic^*(\bx)$ at $\bx$ (or equivalently, the region $\Rc_{\Ic^*(\bx)}$ that $\bx$ belongs to).
Here we compare the execution times of two alternative methods of implementing $\bu^*$ at an arbitrary $\bx\in\Xc$:

\begin{enumerate}
    \item\label{it:alg-solve-qp} \textit{Solver:}  solving the QP with an optimization algorithm (we use the state-of-the-art OSQP algorithm~\citep{BS-GB-PG-AB-SB:20});
    \item\label{it:alg-enumerate-all} \textit{Explicit:} enumerating all possible active sets $\Ic\subseteq[p]$, checking whether $\bx\in\Rc_{\Ic}$ (through~\eqref{eq:R-Ic-expression}), and then executing the corresponding closed-form controller.
\end{enumerate}


\begin{figure}
    \centering
    \includegraphics[width=1.0\linewidth]{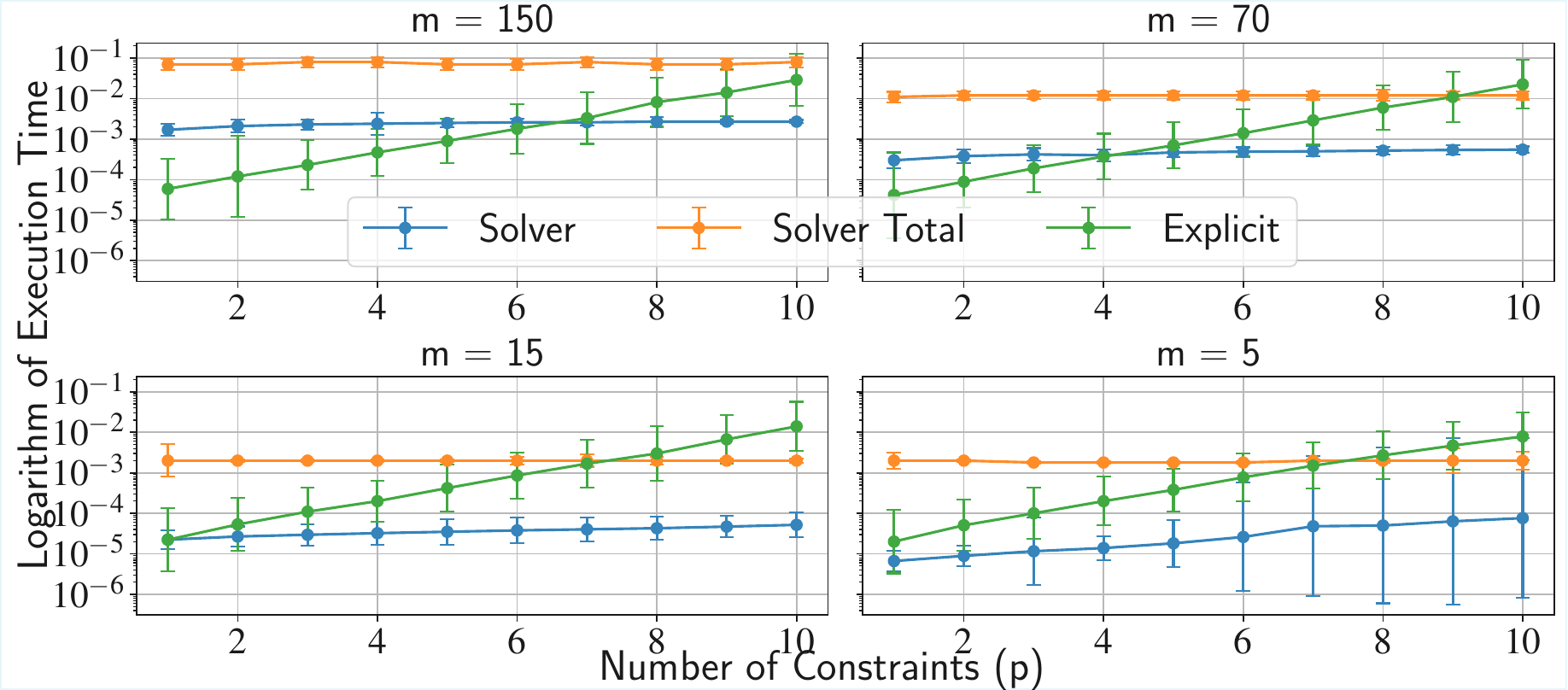}
    \caption{Comparison of methods~\ref{it:alg-solve-qp} and~\ref{it:alg-enumerate-all} for computing $\bu^*$.
    \textit{Solver} refers to the execution time of the OSQP algorithm, whereas \textit{Solver Total} refers to the total elapsed time in a call to the OSQP solver, including problem setup, solver execution, and post-processing.}
    \label{fig:comparison_execution_times}
\end{figure}

For different values of the dimension of the optimization variable ($m$) and number of constraints ($p$), we generate 500 different QPs with randomly generated coefficients.
Figure~\ref{fig:comparison_execution_times} shows the average execution time (with one–standard deviation bars) for methods~\ref{it:alg-solve-qp} and~\ref{it:alg-enumerate-all}. In the figure, we have included execution time of \textit{Solver} both with and without the solver setup time since
both measures need to be considered in applications.

We observe that the \textit{Explicit} method is faster when the dimension of the optimization variable is large and the number of constraints is small.
Hence, Figure~\ref{fig:comparison_execution_times} illustrates the regimes in which each of the different methods should be used as our computation method~$\Theta$ in Algorithm~\ref{alg:resource-aware-explicit-QP}: \textit{Explicit} for high-dimensional optimization variables and small number of constraints, and \textit{Solver} for low-dimensional optimization variables with many constraints.

\subsection{Servo-control of the roll-yaw dynamics of an aircraft}\label{sec:servo-control-lateral-dynamics-aircraft}

We consider the roll-yaw dynamics of a mid-size aircraft around an operating point (cf.~\citep[Section 14.8]{EL-KAW:24}) with velocity $717.17$ ft/sec, altitude $25000$ ft, and angle of attack $4.5627^\circ$.
The state is $\bx_p=[\beta, p_s, r_s]^\top$, where $\beta$ is sideslip (rad) and $p_s$, $r_s$ are roll and yaw rates (rad/s). The inputs are aileron and rudder deflections $\delta_a$ and $\delta_r$ (rad). The dynamics are
\begin{align*}
    \dot{\bx}_p = \bA_p \bx_p + \bB_p \bu,
\end{align*}
with
\begin{align*}
    \bA_p \! = \! {\small\begin{bmatrix}
        -0.1179 & 0.0009 & -1.001 \\
        -7.0113 & -1.4492 & 0.2206 \\
        6.3035 & 0.0651 & -0.4117
    \end{bmatrix}}, \ 
    \bB_p \! = \! {\small
\begin{bmatrix}
    0 & 0.0153 \\
    -7.9662 & 2.6875 \\
    0.6093 & -2.3577
\end{bmatrix}.
}
\end{align*}

The outputs are the roll rate $p_s$ (rad/s) and lateral load factor $N_y$ ($g$), with $g = 32.174$ ft/s$^2$. They are given by
\begin{align*}
    \by_{\text{reg}} = \begin{bmatrix}
        p_s \\
        N_y
    \end{bmatrix} = 
    \bC_{\text{p,reg}} \bx_p + \bD_{\text{p,reg}} \bu,
\end{align*}
with 
\begin{align*}
    \bC_{\text{p,reg}} = {\small\begin{bmatrix}
        0 & 1 & 0 \\
        -2.6049 & 0.0187 & 0.0677
    \end{bmatrix}}, \ 
    \bD_{\text{p,reg}} = {\small\begin{bmatrix}
        0 & 0 \\
        0 & 0.3370
    \end{bmatrix}}.
\end{align*}
Our goal is to regulate these outputs to a commanded signal $\by_{\text{cmd}}:\real_{\geq 0} \to \real^2$.
We design a baseline LQR PI controller using the integrated output tracking error dynamics (cf.~\citep[Section 4.4.1]{EL-KAW:24}), with $\bQ = \text{diag}(1.025, 1.029, 0.0, 0.0, 1.602)$ and $\bR = \text{diag}(1, 1)$.


Next, we set roll rate limits to $\pm 0.4$, and yaw rate limits to $\pm 0.05$.
Additionally, in order to prevent the integrator states from winding up (which would lead to poor tracking performance), we also constrain the integrator errors to be in the interval $\pm 0.3$.
In order to do so, we follow an approach similar to that of~\citep{EL:25} and introduce a virtual control input $\bv$ for the error variables as follows:
\begin{align*}
    \dot{\be}_{yI} = \by_{\text{reg}} - \by_{\text{cmd}} + \bv.
\end{align*}
The augmented system in the $[\be_{yI}, \bx_p]$ variables is:
{\small\begin{align*}
    \begin{bmatrix}
        \dot{\be}_{yI} \\
        \dot{\bx}_p
    \end{bmatrix} = 
    \begin{bmatrix}
        \mathbf{0}_{m\times m} & \bC_{\text{p,reg}} \\
        \mathbf{0}_{n_p \times m} & \bA_p
    \end{bmatrix} 
    \begin{bmatrix}
        \be_{yI} \\
        \bx_p
    \end{bmatrix}
    + 
\begin{bmatrix}
        \bD_{\text{p,reg}} \\
        \bB_p
    \end{bmatrix} \bu
    + 
\begin{bmatrix}
        -\mathbf{I}_m \\
        0
    \end{bmatrix} (\by_{\text{cmd}} - \bv).
\end{align*}}

As noted earlier, the use of QP solvers is undesirable in aerospace applications, so we use Algorithm~\ref{alg:resource-aware-explicit-QP}.
Here,  $\Theta$ is computed by checking all active sets of cardinality less than $2$ (which is the number of inputs and hence the maximum possible number of active constraints at the optimizer if LICQ holds).
Figure~\ref{fig:filtered-output-evolution} 
showcases the evolution of the output variables, which as expected remain within the operational limits.
In order to guarantee feasibility of the QP, we introduce slack variables and penalize them in the objective function, as described in Remark~\ref{rem:general-parametric-qps}, item~\ref{it:cbf-conditions-slack}. In this example, the function $\Theta$ is only called 8 times (out of the total of 3000 sampling times).

\begin{figure}
    \centering
    \includegraphics[width=0.99\linewidth]{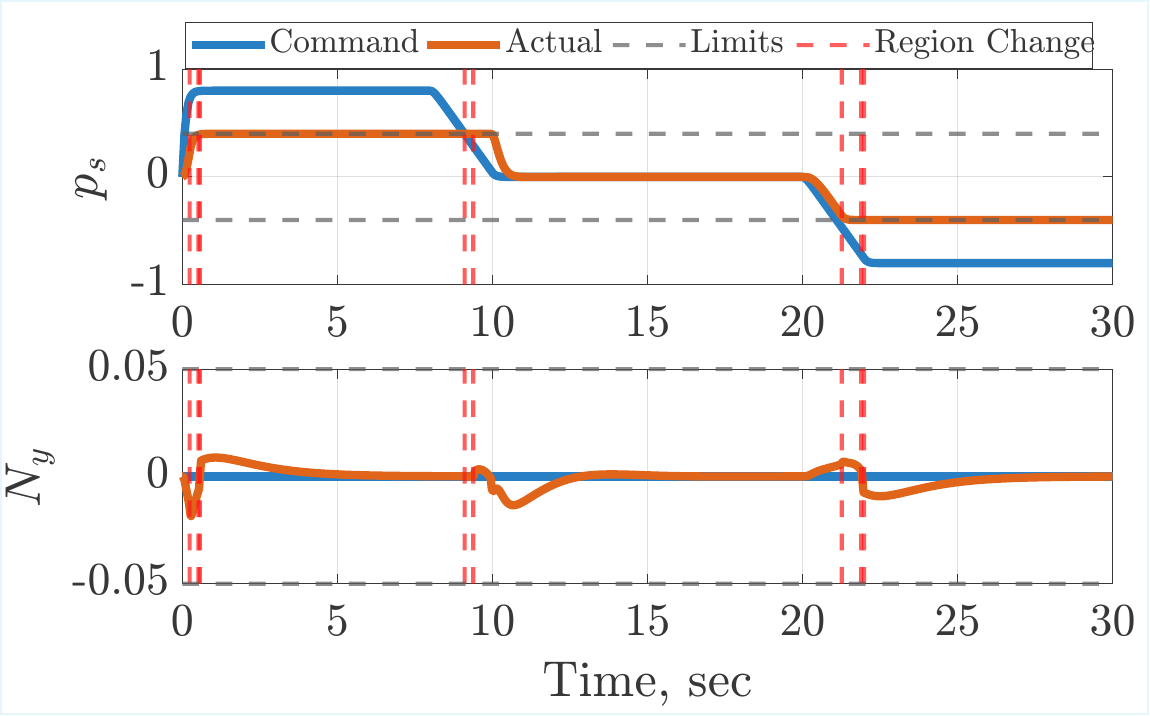}
    \caption{Output evolution under the CBF-based safety filter for the example in Section~\ref{sec:servo-control-lateral-dynamics-aircraft}.}
    \label{fig:filtered-output-evolution}
\end{figure}


\subsection{Navigation of a team of quadrotors}\label{sec:navigation-team-quadrotors}

We consider three quadrotors navigating in a shared environment with obstacles. We model each quadrotor using a cascaded control architecture separating translational and rotational dynamics. A 12-dimensional full-order quadrotor model is given by 
\begin{align*}
\underbrace{\left[\begin{array}{c}
\dot{\mathbf{p}}  \\
\dot{\mathbf{v}} \\
\dot{\mathbf{\eta}} \\
\dot{\mathbf{\omega}}
\end{array}\right]}_{\dot{\mathbf{x}}} = \underbrace{\left[\begin{array}{c}
\mathbf{v} \\
-\mathbf{e}_{3} g \\
\mathbf{S}(\mathbf{\eta}) \mathbf{\omega} \\
-\mathbf{J}^{-1} (\mathbf{\omega} \times (\mathbf{J} \mathbf{\omega}))
\end{array}\right]}_{\mb{f}(\mb{x})} 
+ 
\underbrace{\begin{bmatrix}
\mathbf{0}_{3 \times 1} &  \mathbf{0}_{3 \times 3} \\
   \frac{1}{m} \tilde{\mathbf{R}}(\mathbf{\eta}) \mathbf{e}_{3} &\mathbf{0}_{3 \times 3} \\
   \mathbf{0}_{3 \times 1} &  \mathbf{0}_{3 \times 3} \\
\mathbf{0}_{3 \times 1} & \mathbf{J}^{-1}
\end{bmatrix}}_{\mb{g}(\mb{x})}
\underbrace{\begin{bmatrix}
   {F} \\ \mathbf{M}  
\end{bmatrix}}_{\mb{u}},
\end{align*}
where $\mathbf{p} \in \mathbb{R}^{3}$ 
is the position in an inertial frame, $\mathbf{v} \in \mathbb{R}^{3}$ the linear velocity, $\mathbf{\eta} \in \mathbb{S}^1 \times \mathbb{S}^1 \times \mathbb{S}^1$  the roll-pitch-yaw angles, and $\mathbf{\omega} \in \mathbb{R}^{3}$ the body angular velocity. The constants $g$ and $m$ are the gravity and the quadrotor mass, and $\mathbf{J} \in \mathbb{R}^{3 \times 3}$ is the inertia matrix. The system has inputs of body moment vector $\mathbf{M} \in \mathbb{R}^{3}$ and thrust force $F \in \mathbb{R}$. Here $\mathbf{e}_{3}$ is the unit vector in the inertial $z$ direction, $\tilde{\mathbf{R}}: \mathbb{S}^1 \times \mathbb{S}^1 \times \mathbb{S}^1 \to \mathrm{SO(3)} $ is the rotation mapping, and $\mathbf{S}: \mathbb{S}^1 \times \mathbb{S}^1 \times \mathbb{S}^1 \to \mathbb{R}^{3 \times 3}$ is the kinematic mapping relating Euler angular rates and body angular velocity.

For each quadrotor $i \in [3] $, we assume that desired position, velocity, and acceleration ${\mathbf{p}_{r,i}}(t), {\dot{\mathbf{p}}_{r,i}(t)}, {\ddot{\mathbf{p}}_{r,i}(t)}$ are given, and $\mathbf{p}_{r,i}$ is continuous and at least twice differentiable. Given a reference trajectory, we utilize a virtual control input for position control:
\begin{align*}
    \mathbf{a}_{des,i} = \mathbf{K}_{p,i}^{\mathbf{p}}({\mathbf{p}_{r,i}} - \mathbf{p}_{i}) +  \mathbf{K}_{d,i}^{\mathbf{p}}({\mathbf{v}_{r,i}} - \mathbf{v}_{i}) + \ddot{\mathbf{p}}_{r,i},
\end{align*}
where $\mathbf{K}_{p,i}^{\mathbf{p}}, \mathbf{K}_{d,i}^{\mathbf{p}} \in \mathbb{R}^{3 \times 3}$ are diagonal gains. Writing $\mathbf{a}_{des,i} = [{a}_{x,i}~{a}_{y,i}~{a}_{z,i}]^\top$, 
the virtual nominal control input is 
$\bk_i(\bx,t) = m[a_{x,i}, a_{y,i}, g+a_{z,i}]$.
We let the concatenation of $\{ \bk_{i}(\bx,t) \}_{i=1}^3$ be $\bk(\bx,t)\in\real^9$, which is the desired control input that would need to be applied if the dynamics of the three quadrotors were a double integrator.

We enforce collision-avoidance safety by maintaining distance between the quadrotors and spherical obstacles at the position control layer via exponential CBFs (ECBFs) \citep{QN-KS:16}. For every pair of quadrotor $i$ and obstacle $j \in [N_o]$ centered at $\mathbf{c}_j$, we define a CBF: $h_{o,ij}(\mb{x}) = \| \mathbf{p}_{i} - \mathbf{c}_j\|_2^2 - d_{i,j}^2$, where $d_{i,j} \in \real_{\geq 0}$ is the safe distance. Similarly, for each pair of distinct quadrotors $(i,k)$ we define $h_{q,ik}(\mb{x}) = \| \mathbf{p}_{i} - \mathbf{p}_{k}\|_2^2 - d_{i,k}^2$, where $d_{i,k} \in \real_{\geq 0}$ is chosen from the quadrotor radius. Each CBF has relative degree two with respect to the virtual nominal control input $\bk$. Therefore, we obtain an ECBF-QP with safety constraints and input bounds. 
This yields a controller $\bu^*$ following~\eqref{eq:multiple-cbf-based-controller}, which we then actuate with $F$, $\mathbf{M}$ in the full quadrotor dynamics through an inner attitude control layer, following the methods presented in~\citep{RM-VK-PC:12}.

In our simulation~\footnote{\label{footnote:code}The code and additional details of our simulation can be found at \url{https://github.com/ersindas/Quadrotor_CBF}}, 
the workspace of three quadrotors contains $N_o = 16$ obstacles. Thus, the safety filter  has a total $3(N_o + 2) = 54$ ECBF inequalities and 18 input constraints. We choose the reference trajectories to generate frequent encounters. The first quadrotor tracks an 8-shaped path, while the others follow circular orbits in opposite directions. The results for this scenario are shown in Figure~\ref{fig:quadrotors}.   
Figure~\ref{fig:quadrotors_res} shows that using the resource-aware implementation in Algorithm~\ref{alg:resource-aware-explicit-QP} results in a controller execution time that is $6.74$ times faster on average compared to using a QP solver.
In this example, the function $\Theta$ is obtained by using the QP solver and only needs to be used in 112 out of the 2000 sampling times.

\begin{figure}[t]
    \centering
    \includegraphics[width=.8\linewidth]{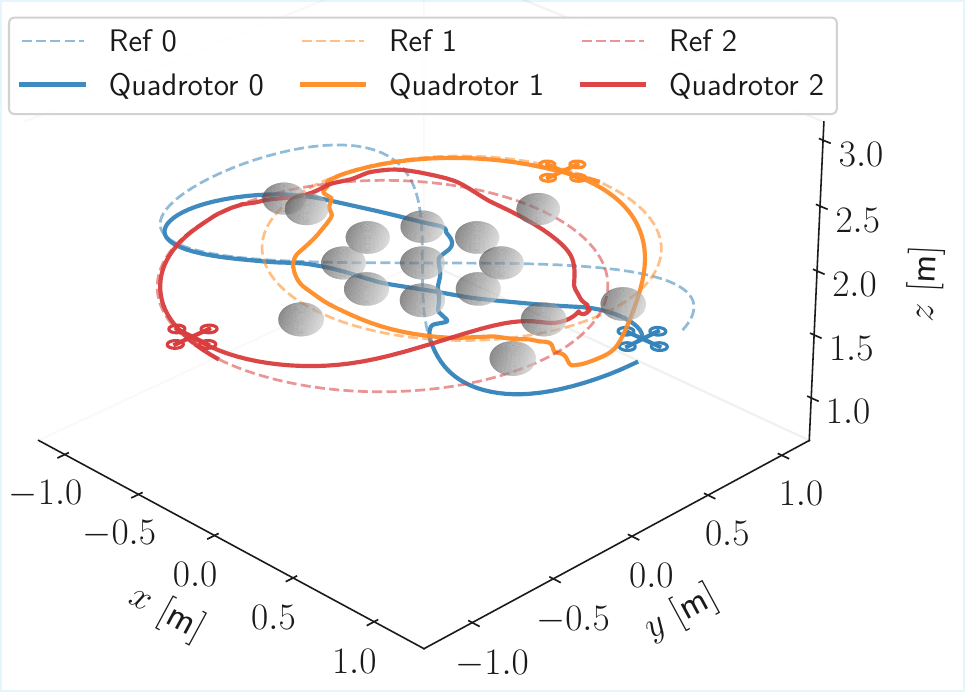}
    \caption{Reference and filtered trajectories defined by the team of quadrotors.}
    \label{fig:quadrotors}
\end{figure}

\begin{figure}[t]
    \centering
    \includegraphics[width=1\linewidth]{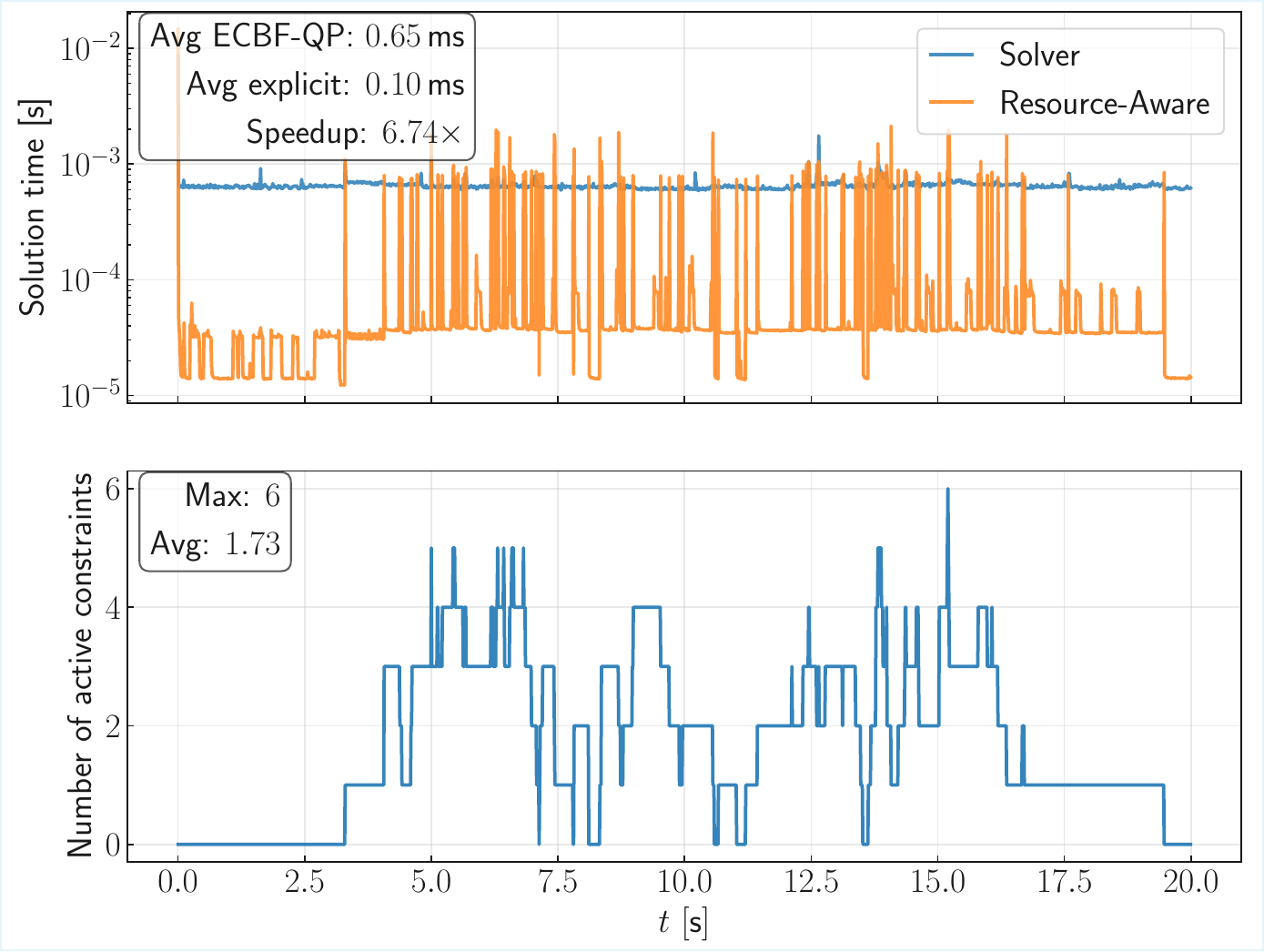}
    \caption{(Top) comparison of the execution time between \textit{Solver} (using the OSQP algorithm) and the \textit{Resource-Aware} implementation in Algorithm~\ref{alg:resource-aware-explicit-QP}. (Bottom) Evolution of the number of active indices.}
    \label{fig:quadrotors_res}
\end{figure}

\subsection{Safety filters in reinforcement learning training}\label{sec:safety-filtering-during-training-rl}

We consider an agent in a 2D environment following single integrator dynamics $\dot{\bx} = \bu$, with $\bx,\bu\in\real^2$.
The environment is a square region with different circular obstacles.
The goal of the agent is to get to a specific location in the environment (cf. Figure~\ref{fig:rl-environmen}) and avoid collisions with the walls and obstacles at all times.


\begin{figure}
    \centering
    \includegraphics[width=0.48\linewidth]{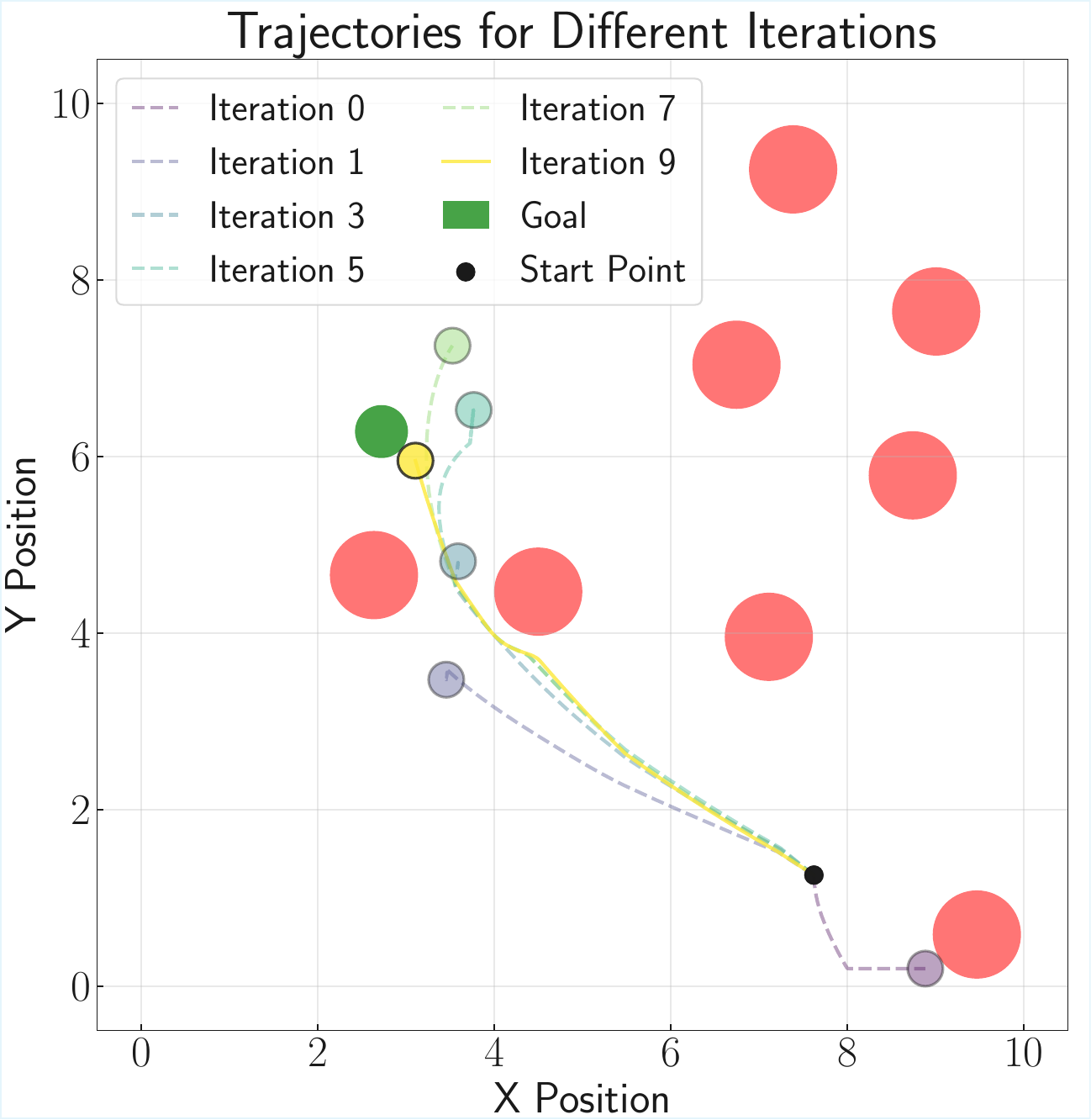}
    \includegraphics[width=0.48\linewidth]{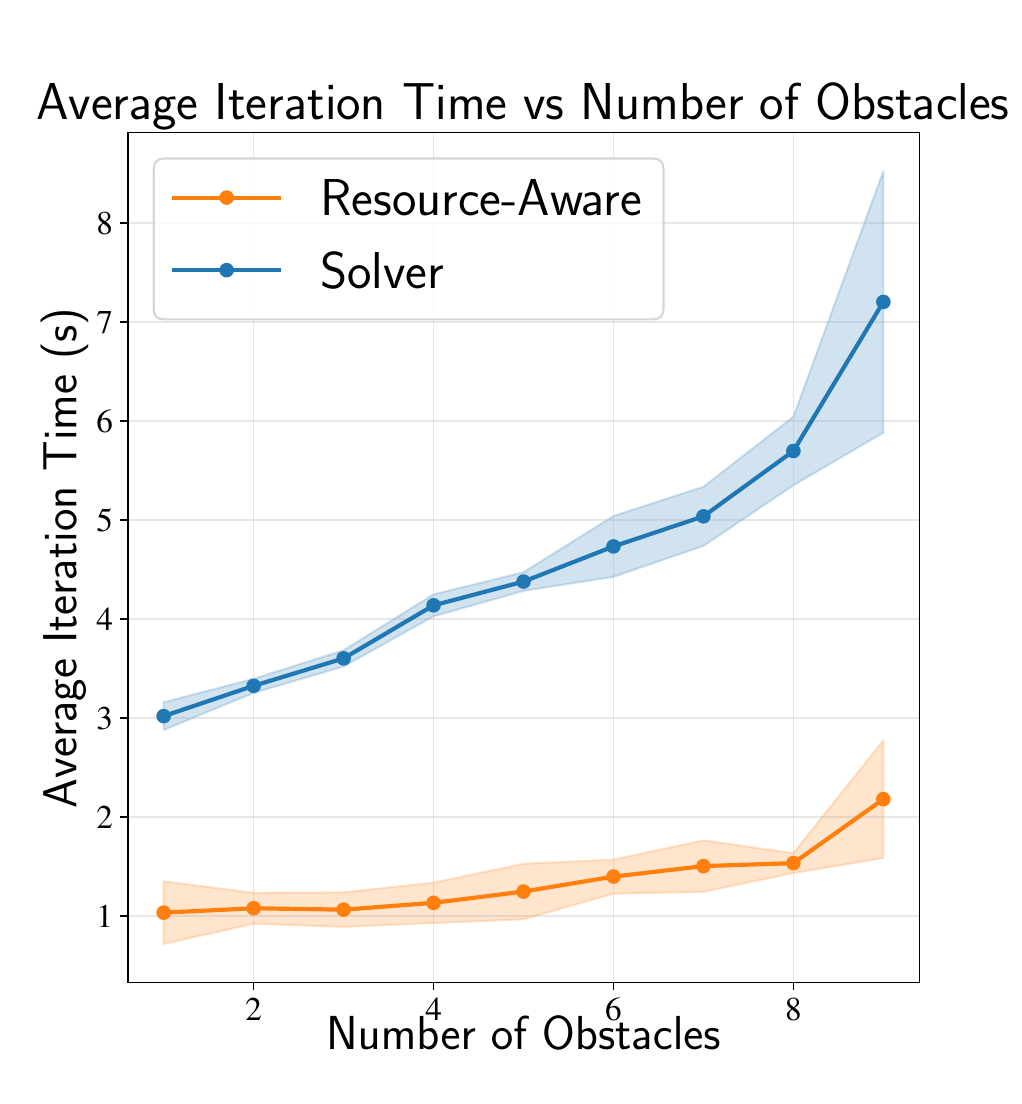}
    \caption{(left) Example of the type of environment used during the RL training process. The trajectories obtained from policies at different iterations are plotted in dashed lines. (right) Average iteration time of using QP solvers during training versus using Algorithm~\ref{alg:resource-aware-explicit-QP}. The shaded area represents the error bars, computed over 10 different trials.}
    \label{fig:rl-environmen}
\end{figure}



To solve this task, we use the CBF-RL algorithm~\citep{LY-BW-MdS-ADA:25}, which trains a nominal RL policy and safety filters its actions using CBFs defining the walls and obstacles in the environment as in~\eqref{eq:multiple-cbf-based-controller}.
This ensures that the RL agent remains safe throughout the training process.
The RL agent is trained over multiple environments similar to those in Figure~\ref{fig:rl-environmen} (left). The nominal RL policy is trained with Proximal Policy Optimization (PPO)~\citep{JS-FW-PD-AR-OK:17}  using 4096 parallel environments for 10 steps.
We compare the use of a QP solver and the resource-aware implementation of the explicit CBF-QP in Algorithm~\ref{alg:resource-aware-explicit-QP} during the RL training process.


Figure~\ref{fig:rl-environmen} (right) shows that computing~\eqref{eq:multiple-cbf-based-controller} using Algorithm~\ref{alg:resource-aware-explicit-QP} results in a considerable improvement in the average iteration time during the training process.
Furthermore, the improvement increases as the number of obstacles in the environment increases.
Figure~\ref{fig:rl-environmen} (right) shows that Algorithm~\ref{alg:resource-aware-explicit-QP} enables the use of CBF-RL in environments with multiple CBF constraints (the fact that using QP solvers during training is impracticable for CBF-RL was already recognized in~\citep{LY-BW-MdS-ADA:25}, which is why the safety filters therein are always limited to a single CBF constraint).

\section{Conclusions}
In this paper we have introduced a closed-form expression for the controller obtained from CBF-based safety filters. This formula is derived by partitioning the state-space into different regions, depending on the set of active constraints at the optimizer of the safety filter, and using a different closed-form expression in each region.
By leveraging this formula, we have introduced a resource-aware implementation of CBF-based safety filters, which bypasses the use of QP solvers at every state by detecting region changes and using the closed-form expression within each region.
We have illustrated the benefits of this novel implementation in a variety of examples, including aerospace control, navigation of a team of quadrotors, and safe RL.
Future work will focus on improving the efficiency of the resource-aware implementation by using pre-trained classifiers for the function $\Theta$ and applying the proposed framework in other CBF techniques, such as backup or robust CBFs.

\bibliography{bib/alias_brief,bib/Main-add,bib/Main,bib/JC,bib/New}

\end{document}